\documentclass[11pt,a4paper]{article}

\usepackage[T1]{fontenc}
\usepackage[utf8]{inputenc}
\usepackage{lmodern}

\usepackage[margin=2.5cm]{geometry}
\usepackage{amsmath}
\usepackage{amssymb}
\usepackage{amsthm}
\usepackage{mathtools}

\usepackage{booktabs}
\usepackage{array}
\usepackage{graphicx}
\graphicspath{{figures/}}
\usepackage{xcolor}

\usepackage[authoryear,round]{natbib}
\usepackage[colorlinks=true,linkcolor=blue,citecolor=blue,urlcolor=blue]{hyperref}
\usepackage{orcidlink}

\usepackage{macros}

\newcommand{\Lam}{\Lambda^{(s)}}            
\newcommand{\vphi}{\boldsymbol{\varphi}}     
\newcommand{\uzero}{\mathbf{u}_0}            
\newcommand{\uone}{\mathbf{u}_1}             
\newcommand{\Czero}{\mathbf{C}_0}            
\newcommand{\Cone}{\mathbf{C}_1}             
\newcommand{\vB}{\mathbf{B}}                 
\newcommand{\kap}{\kappa(\Phi)}              
\newcommand{\eff}{e(s)}                      
\newcommand{\thetaz}{\theta_0}               
\newcommand{\Ezero}{\mathbb{E}_0}
\newcommand{\Eone}{\mathbb{E}_1}
\newcommand{\Einf}{\mathbb{E}_\infty}
\newcommand{\Etau}{\mathbb{E}_\tau}
\newcommand{\Pzero}{\mathbb{P}_0}
\newcommand{\Pone}{\mathbb{P}_1}
\newcommand{\Pinf}{\mathbb{P}_\infty}
\newcommand{\Ptau}{\mathbb{P}_\tau}
\newcommand{\Dkl}{D_{\mathrm{KL}}}           
\newcommand{\tri}{\Delta}                    
\newcommand{\ARL}{\mathrm{ARL}_0}            
\newcommand{\phimax}{\varphi_{\max}}         
\newcommand{\Ncal}{N}                        

\newcommand{\SR}{\textsc{SR}}
\newcommand{\PFA}{\mathrm{PFA}}
\newcommand{\Rn}{R_n}
\newcommand{\Sn}{S_n}
\newcommand{\Tclass}{\mathcal{T}(\Lam)}

\newcommand{\huzero}{\hat{\mathbf{u}}_0}     
\newcommand{\huone}{\hat{\mathbf{u}}_1}      
\newcommand{\hCzero}{\hat{\mathbf{C}}_0}     
\newcommand{\hCone}{\hat{\mathbf{C}}_1}      
\newcommand{\hmatF}{\hat{\matF}}             
\newcommand{\hvK}{\hat{\vK}}                 
\newcommand{\hvY}{\hat{\vY}}                 

\theoremstyle{plain}
\newtheorem{theorem}{Theorem}[section]
\newtheorem{proposition}[theorem]{Proposition}

\theoremstyle{definition}
\newtheorem{definition}[theorem]{Definition}
\newtheorem{assumption}[theorem]{Assumption}
\newtheorem{example}[theorem]{Example}
\theoremstyle{remark}
\newtheorem{remark}[theorem]{Remark}

\begin{document}
\title{Generalized Stochastic Approximation of the Log-Likelihood Ratio\\
for Robust Sequential Change-Point Detection}
\author{Serhii Zabolotnii\,\orcidlink{0000-0003-0242-2234}\\[0.4em]
  \normalsize Cherkasy State Business College, Cherkasy 18028, Ukraine\\
  \normalsize State Scientific Research Institute of Armament and Military Equipment\\
  \normalsize Testing and Certification, Cherkasy, Ukraine\\
  \normalsize Uzhhorod National University, Uzhhorod, Ukraine}
\date{}
\maketitle

\begin{abstract}
Generalized stochastic approximation replaces the log-likelihood ratio of a sequential
change-point problem by an affine function of a fixed feature dictionary, whose coefficients
solve a normal system in the dictionary's moments under the pre- and post-change laws. The
increment is a rescaled orthogonal projection of the bounded score, and the information it
carries is bounded by the triangular discrimination between the laws. The dictionary decides what
is detectable: a symmetry shared by both laws annihilates a whole parity at every order, and one
further column raises the information by an explicit bordered-metric increment. A single
efficiency factor, at most one, governs all three procedures the increment drives ---
cumulative-sum, Shiryaev--Roberts and Shiryaev --- through false-alarm and delay bounds valid at
every threshold. Estimated coefficients are consistent and asymptotically normal. A Monte-Carlo
study measures the factor against the delay ratio it predicts, and locates the order at which the
dictionary stops paying.
\end{abstract}

\noindent\textbf{Keywords:} sequential change-point detection; log-likelihood-ratio approximation; generalized moments; CUSUM; Shiryaev--Roberts


\section*{Changes in this version}

This version replaces the manuscript by its theory. Versions~1--3 were built
around a benchmark study on nine public corpora, and the detection rates that
study reported are wrong. They came from a scoring path that recorded a run in
which the detector \emph{never raised an alarm} as a detection at the last
sample of the test segment, with a delay equal to the distance from the change
point to the end of that segment. The baseline detectors reported a miss in the
same situation and were scored correctly, so the error inflated the proposed
method alone.

Re-scored under the protocol those versions themselves stated --- a trial ends
in a false alarm if the detector stops before the change point, in a detection
if it stops after it, and in a missed target if no alarm is raised before the
end of the test segment --- the detection rates are:

\begin{center}
\begin{tabular}{@{}lcc@{}}
\toprule
Corpus & reported in v1--v3 & correct \\
\midrule
NSL-KDD (15 transitions) & $1.00$ & $0.33$ \\
NSL-KDD (30 transitions) & $1.00$ & $0.00$ \\
SKAB & $1.00$ & $0.02$ \\
TCPD & $0.80$ & $0.11$ \\
NAB~EC2 & $0.90$ & $0.65$ \\
NASA IMS & $1.00$ & $0.60$ \\
US RealInt & $1.00$ & $0.67$ \\
FEDFUNDS & $0.60$ & $0.20$ \\
FTSE~100 & $1.00$ & $1.00$ \\
PhysioNet~2019 & $0.00$ & $0.00$ \\
\bottomrule
\end{tabular}
\end{center}

The false-alarm rates are unaffected: a false alarm was determined by a stop
\emph{before} the change point, which the error does not touch. The two corpora
on which the detector raised no silent run, FTSE~100 and PhysioNet~2019, are
unchanged.

Three claims of v1--v3 do not survive the correction and are withdrawn: that the
detector is the only method that remains operative on extremely heavy-tailed
industrial vibration data; that it attains a false-alarm rate of zero at a
detection rate of one on network-intrusion data; and that it is the only method
reaching a detection rate of one at a false-alarm rate of zero on industrial
sensor data. On the heavy-tailed corpus a retrospective penalised-cost method
and a kernel method both detect every change point, and far sooner.

Two further defects were found while checking the first, and both are recorded
here rather than repaired in silence. Every basis value is clipped to a fixed
window, so on a series that is not on a unit scale the whole dictionary collapses
to one constant: the detector is then inert at every threshold rather than
conservative, which is what happened on sixty per cent of the trials of one
corpus. And at first order the three dictionaries compared throughout v1--v3
coincide identically, so the rows reporting them are one detector printed three
times.

What is \emph{not} affected: the theory, its proofs, and the Monte-Carlo study,
which is run on specified laws through a separate code path that scores a run
without an alarm correctly. That is what this version contains.

The benchmark study is not deleted. It is public in the code supplement named in
the data availability statement, with its result files, the script that re-scores
them, and a statement of what it does and does not establish. What it supports
is a scope result: the detector holds its false-alarm level where the change is
large relative to the in-control excursion of the statistic, and it does not see
a change of shape at an unchanged mean on a short calibration segment --- which
is the empirical face of the parity theorem proved here.

A correction issued earlier, and carried into this version, concerns
Theorem~2 of v1--v3: parts (a) and (c) are false and only monotonicity, part~(b),
holds. The complete-basis limit is not the Jeffreys divergence. The bound stated
in this version is the triangular discrimination, and it is proved.

This version also changes the normalisation of the normal system, and because that
changes printed constants it is spelled out here. Versions~1--3 fixed
$\mathbf F=\tfrac12(\mathbf C_0+\mathbf C_1)$ in the definition of
$\mathbf F\mathbf K=\mathbf Y$; this version uses $\mathbf F=\mathbf C_0+\mathbf C_1$,
the metric of the companion papers on moment-based discrimination, so that
statements quoted from them need no conversion --- and so that the definition
agrees with the inner product $\langle u,v\rangle_{\mathbf F}=\mathrm{Cov}_0(u,v)
+\mathrm{Cov}_1(u,v)$ that v1--v3 used in their own discussion of whitening,
which the factor $\tfrac12$ contradicted.

Relative to v1--v3 every coefficient vector $\mathbf K$, every increment
$\Lambda^{(s)}$ and every value of the information functional $J(s)$ is halved, and
every tilt root $\theta_0$ is doubled. Unchanged are the tilted increment
$\theta_0\Lambda^{(s)}$ that all three procedures consume, the efficiency factor,
the captured Fisher-information fraction and every stopping rule, so no bound and
no decision moves.

\emph{This is a change of units, not a further correction of Theorem~2.} The
erratum carried into v3 stated the complete-basis limit as
$(1/c)\cdot 2\Delta/(2-\Delta)$ for $\mathbf F=c(\mathbf C_0+\mathbf C_1)$, that is
$4\Delta/(2-\Delta)$ at $c=\tfrac12$; here $c=1$ and the same statement reads
$2\Delta/(2-\Delta)$. The Bernoulli witness of that erratum, $p_0=0.2$ against
$p_1=0.8$, accordingly prints $0.36/0.32=1.125$ here where v3 printed
$0.36/0.16=2.25$; the Jeffreys divergence it is compared against,
$1.2\ln 4\approx1.664$, does not depend on the normalisation, so this version also
gives the second witness, $p_0=0.01$ against $p_1=0.99$, where the ceiling
$0.9604/0.0198\approx48.5$ exceeds the symmetrised divergence
$1.96\ln 99\approx9.01$ --- the two stand in no fixed order, which is the point of
that part of the theorem.


\section{Introduction}\label{sec:intro}

\subsection{The problem}

Observations $x_1, x_2, \dots$ arrive one at a time. Up to an unknown time $\tau$ they are
independent with density $\fzero$; from $\tau$ on they are independent with density $\fone$. The
task is to raise an alarm soon after $\tau$ while keeping false alarms rare. When both densities
are known, the log-likelihood ratio $\ell = \ln(\fone/\fzero)$ is available at every observed
point, and the classical stopping rules built on it are optimal in the senses of
\citet{lorden1971procedures} and \citet{moustakides1986optimal}, of \citet{pollak1985optimal},
and of \citet{shiryaev1963optimum}.

This paper studies those same rules when the pair $(\fzero, \fone)$ is known only through a
finite list of generalized moments. A dictionary $\Phi = (\varphi_1, \dots, \varphi_s)$ of
features is fixed in advance and written as the vector
$\vphi = (\varphi_1, \dots, \varphi_s)\transp$; the pair enters only
through the means $\uzero = \Ezero\vphi$, $\uone = \Eone\vphi$ and the covariances $\Czero$,
$\Cone$, obtained as in Assumption~\ref{ass:h1}: $(\uzero, \Czero)$ from a calibration sample
recorded during normal operation, $(\uone, \Cone)$ either from a specification of the direction
and the size of the change in those moments or from a labelled sample of an earlier incident.
Generalized stochastic approximation maps that description to the increment
\[
\Lam(x) = \vK\transp\bigl(\vphi(x) - \tfrac12(\uzero + \uone)\bigr),
\qquad
\matF\vK = \vY, \quad \matF = \Czero + \Cone, \quad \vY = \uone - \uzero,
\]
whose normal system is that of \citet{kunchenko1997moment}, and to the
scalar $\Jfunc{s} = \vY\transp\matF^{-1}\vY$. No density need be named, and nothing is
estimated from the monitored stream. The regime this serves is a monitored stream whose law is
heavy-tailed or left unspecified, but whose feature moments are cheap to calibrate while it runs
in control. The object of the paper is the map from this description to
the operating characteristics of the rules driven by $\Lam$.

\subsection{Contributions}

\begin{itemize}
\item \emph{The approximation is a projection of the bounded score, not of the log-likelihood
  ratio.} In the space $L^2(m)$ of the average law $m = \tfrac12(\fzero + \fone)$, the increment
  $\Lam$ is a rescaled orthogonal projection of $\tanh(\ell/2)$ onto the span of the dictionary,
  and it is the minimiser of the moment criterion for testing of \citet{kunchenko1997moment}
  (Proposition~\ref{prop:ku1}). The information functional $\Jfunc{s}$ is non-decreasing under
  refinement of the dictionary and is bounded by a function of the triangular discrimination
  between $\fzero$ and $\fone$ --- not of their symmetrised divergence, from which it differs in
  both directions (Theorem~\ref{thm:info}). The identity is Proposition~1 of
  \citet{zabolotnii2026spl} and the envelope identity of \citet{zabolotnii2026mpfd}, in the
  same metric $\Czero+\Cone$; the ceiling is that envelope; its equality case, the limit and
  the counterexamples are proved here.
\item \emph{The structure of the dictionary decides what is detectable.} If $\fzero$ and $\fone$
  are symmetric about a common centre, every column odd about that centre contributes nothing to
  $\Jfunc{s}$; if $\fone$ is the reflection of $\fzero$ about that centre, every even column
  contributes nothing (Theorem~\ref{thm:parity}). A dictionary of the wrong parity is therefore
  blind at every order, and not only at the order at which the blindness is first noticed. Which
  columns are admissible is settled by the tails, and the clipping of
  Assumption~\ref{ass:bounded} is what removes that constraint
  (Remark~\ref{rem:admissible}). Adding one
  admissible column raises $\Jfunc{s}$ by an explicit bordered-metric increment, zero exactly when
  the new column carries no residual target mass (Proposition~\ref{prop:bordered}), so the
  monotonicity of Theorem~\ref{thm:info} has a computable price.
\item \emph{One efficiency factor governs the whole class of rules driven by the increment.}
  The equation $\Ezero e^{\theta\Lam} = 1$ has a unique positive root $\thetaz$, which exists
  because $\Lam$ is bounded and $\Ezero\Lam = -\Jfunc{s}/2 < 0$, and for the
  \CUSUM{}, the Shiryaev--Roberts rule and the Shiryaev rule all driven by $\thetaz\Lam$ there
  is a bound on the in-control run length from below, or on the probability of false alarm from
  above, at every threshold, an upper bound on the delay at every threshold, and one factor
  $\eff = \thetaz\Eone\Lam / \Dkl(\fone \Vert \fzero) \le 1$, where $\Dkl$ is the
  Kullback--Leibler divergence, with equality if and only if
  $\thetaz\Lam = \ell$ almost surely; in the local limit $\eff$ tends to the fraction $\kap$ of
  Fisher information captured by the dictionary (Theorem~\ref{thm:class}). The score-based
  \CUSUM{} of \citet{wu2023scusum} is the member of the class with $s = 1$ carrying the
  Hyv\"arinen-score difference as its single feature (Example~\ref{ex:score}).
\item \emph{The coefficients estimated from a calibration sample have a limit theory.} The
  coefficient vector, the information functional and the tilt root computed from a calibration
  sample of size $\Ncal$ are consistent and asymptotically normal, with a covariance obtained by
  the delta method; the captured information is lost at rate $O_p(\Ncal^{-1})$ and the efficiency
  at rate $O_p(\Ncal^{-1/2})$, because the normal system optimises a quadratic proxy of $\eff$
  and not $\eff$ itself (Theorem~\ref{thm:plugin}). A short Monte-Carlo study
  (Section~\ref{sec:numerics}) compares the predicted factor with measured delay ratios for two
  members of the class.
\end{itemize}

\subsection{What is not claimed}

\begin{remark}[Scope]\label{rem:notclaimed}
Four limitations are stated here and not revisited. (i)~No optimality of the stopping rules is
claimed: the optimality results of \citet{lorden1971procedures},
\citet{moustakides1986optimal}, \citet{pollak1985optimal} and \citet{shiryaev1963optimum} are
statements about $\ell$, whereas on $\Lam$ the excess delay is bounded by the computed factor
$1/\eff$, which is not minimised over any uncertainty class. (ii)~No optimality of the dictionary
is claimed: Theorem~\ref{thm:parity} and Proposition~\ref{prop:bordered} say when a column is
blind and what a column is worth, but which finite dictionary is best for a given pair is left
open. (iii)~Nothing is claimed about dependent observations: the change of measure behind
Theorem~\ref{thm:class} uses the martingale structure that independence supplies. (iv)~Nothing is
claimed about measured streams: the evidence here is the Monte-Carlo study of
Section~\ref{sec:numerics}, run on specified laws, and no comparison against competing detectors
on annotated corpora is offered or implied.
\end{remark}

\subsection{Related work}

When both laws are known the theory is settled. \citet{lorden1971procedures} formulated the
worst-case delay criterion under a constraint on the in-control average run length and obtained
the asymptotic lower bound $(\log\gamma)/\Dkl(\fone \Vert \fzero)$ on the delay at in-control run
length $\gamma$, and \citet{moustakides1986optimal} proved that
the \CUSUM{} attains it. \citet{pollak1985optimal} proposed the conditional criterion under which
the Shiryaev--Roberts rule is asymptotically optimal and \citet{pollak1987average} computed its
run lengths; \citet{shiryaev1963optimum} solved the Bayesian formulation with a geometric prior.
Book-length accounts are \citet{siegmund1985sequential} and \citet{tartakovsky2014sequential},
and the survey of \citet{xie2021sequential} is a recent entry point. Each of these procedures
requires $\ell$ pointwise.

A second line replaces the unknown post-change law by an estimate: nonanticipating estimation
\citep{lorden2005nonanticipating}, the information bounds and generalised likelihood ratio scheme
of \citet{lai1998information}, its window-limited form \citep{xie2023window}, and the multistream
schemes of \citet{mei2010efficient} and \citet{xie2013sequential}. A third line drops the
likelihood ratio: graph-based and nearest-neighbour statistics \citep{chen2015graph,
chen2019sequential} are distribution-free at the price of a slower response to weak changes, and
\citet{wu2023scusum} substitute the Hyv\"arinen score, so that an unnormalised model suffices.
The moment route is older. \citet{kunchenko1997moment} builds decision rules from a finite set of
moments of a chosen dictionary by minimising the ratio of the summed variances to the squared
separation of the means, the reciprocal of the deflection criterion whose reach in detection
\citet{picinbono1995deflection} delimited, and \citet{kunchenko2006stochastic} develops that
construction at length. This criterion is a testing apparatus, and it is not the polynomial
maximisation method of \citet{kunchenko2002polynomial}, an estimation apparatus whose coefficients
solve a normal system in cumulants for minimum asymptotic variance rather than for maximum
separation; the two are siblings under the same stochastic-polynomial umbrella, and only the first
is the ancestor of the increment studied here. \citet{palagin2010moment} adapted the testing
criterion to several hypotheses, and \citet{zabolotnii2015semiparametric} applied a polynomial
instance to sequential detection.

The present paper takes the moment description as given and asks what it determines.


\section{Setting and the class of procedures}\label{sec:setting}

\subsection{Model, metrics, assumptions}\label{sec:model}

Observations $x_1, x_2, \dots$ are real and independent. Under $\Pinf$ every $x_t$ has density
$\fzero$; under $\Ptau$, $\tau \ge 1$, the observations $x_t$ with $t < \tau$ have
density $\fzero$ and those with $t \ge \tau$ have density $\fone$. Expectations are written
$\Einf$ and $\Etau$; $\Ezero$, $\Eone$ (and $\Pzero$, $\Pone$) refer to a single observation
under $\fzero$, $\fone$. A detector is a stopping time $T$ of the observed sequence.

\begin{definition}[Performance metrics]\label{def:metrics}
The in-control average run length is $\ARL = \Einf T$. The worst-case detection delay is
\[
\ADD(T) = \sup_{\tau \ge 1}\ \operatorname*{ess\,sup}\
\Etau\bigl[(T - \tau + 1)^{+} \mid x_1, \dots, x_{\tau-1}\bigr].
\]
When the change time $\tau$ carries a prior distribution, the probability of false alarm is
$\PFA(T) = \Pr(T < \tau)$ under the joint law of the prior and the observations. In the
Monte-Carlo study of Section~\ref{sec:numerics} the change occurs at the first observation with
the statistic at its initial value, so the reported delay is $\ADD$ itself.
\end{definition}

The dictionary is an ordered list $\Phi = (\varphi_1, \dots, \varphi_s)$ of real functions of one
observation, written as the vector $\vphi(x) = (\varphi_1(x), \dots, \varphi_s(x))\transp$; $s$
is its order. Its moments under the two laws are $\uzero = \Ezero\vphi$, $\uone = \Eone\vphi$,
$\Czero = \operatorname{Cov}_0\vphi$ and $\Cone = \operatorname{Cov}_1\vphi$.

\begin{assumption}[Moments]\label{ass:moments}
Every feature has a finite second moment under $\fzero$ and under $\fone$; $\Czero$ is positive
definite; and $\fone$ vanishes wherever $\fzero$ does.
\end{assumption}

\begin{assumption}[Bounded dictionary]\label{ass:bounded}
There is a constant $\phimax < \infty$ with $|\varphi_i(x)| \le \phimax$ for all $i$ and all $x$.
\end{assumption}

Assumption~\ref{ass:bounded} is realised by clipping the raw features at $\phimax$, whose value
is a modelling choice of the analyst and is not fixed here. Under it every moment and every
exponential moment of the increment defined below is finite, which is what
Theorem~\ref{thm:class} uses; the clipping is a modelling device and not only a numerical
safeguard, since on a heavy-tailed $\fzero$ an unclipped dictionary of positive exponents need
not have finite second moments at all, and then Theorem~\ref{thm:class} has nothing to apply to.

\begin{assumption}[Post-change specification]\label{ass:h1}
The pair $(\uone, \Cone)$ is available in one of two modes.
\emph{(M) Specification.} The analyst supplies the target directly, as a vector of feature means
$\uone$ --- equivalently, as the direction and the size of the intended change
$\vY = \uone - \uzero$ --- together with a positive semidefinite $\Cone$, without naming a
density $\fone$.
\emph{(E) Estimation.} The pair $(\uone, \Cone)$ is the sample mean and sample covariance of the
features on a labelled post-change sample.
\end{assumption}

In neither mode is $\fone$ learned from the monitored stream. Mode~(M) fixes the alternative the
detector is tuned to and is the price of the construction: the guarantees below are statements
about the actual pair $(\fzero, \fone)$ carried by the stream, not about every law consistent
with a specified moment pair, and a target that points away from the actual change enters
Theorem~\ref{thm:class} through a smaller efficiency of that pair. The error a misspecified
target introduces does not vanish as the calibration sample grows, and is therefore not covered
by Theorem~\ref{thm:plugin}.

\begin{assumption}[Local family]\label{ass:local}
$\{g_\theta : \theta \ge 0\}$ is a family of densities with $g_0 = \fzero$, differentiable in
quadratic mean at $\theta = 0$ with score $q \in L^2(\fzero)$, $\Ezero q = 0$ and Fisher
information $I = \Ezero q^2 \in (0, \infty)$; the post-change law is $\fone = g_\theta$ for some
$\theta > 0$, and the local limit refers to $\theta \downarrow 0$.
\end{assumption}

Assumption~\ref{ass:local} is used only for the local limit of Theorem~\ref{thm:class}(e) and for
Definition~\ref{def:shape}; the rest of the paper fixes $\fzero$ and $\fone$ and does not embed
them in a family.

\subsection{The generalized stochastic approximation}\label{sec:gsa}

The log-likelihood ratio of one observation is
\begin{equation}\label{eq:llr}
\ell(x) = \ln \frac{\fone(x)}{\fzero(x)}.
\end{equation}
The construction replaces $\ell$ by the order-$s$ increment
\begin{equation}\label{eq:surrogate}
\Lam(x) = \vK\transp\bigl(\vphi(x) - \tfrac12(\uzero + \uone)\bigr)
= k_0 + \vK\transp\vphi(x),
\qquad k_0 = -\tfrac12\vK\transp(\uzero + \uone),
\end{equation}
whose coefficient vector solves the normal system
\begin{equation}\label{eq:normal-system}
\matF\vK = \vY, \qquad
\matF = \Czero + \Cone, \qquad
\vY = \uone - \uzero .
\end{equation}
Under Assumption~\ref{ass:moments}, with $\Cone$ positive semidefinite in either mode of
Assumption~\ref{ass:h1}, $\matF \succeq \Czero \succ 0$, so $\matF^{-1}$ exists and $\vK$
is well defined. The
information functional is
\begin{equation}\label{eq:Jfunc}
\Jfunc{s} = \vK\transp\vY = \vY\transp\matF^{-1}\vY,
\end{equation}
positive whenever $\vY \ne \mathbf{0}$. The metric $\Czero+\Cone$, the vector $\vY$ and the
functional $\Jfunc{s}$ are those of \citet{zabolotnii2026spl,zabolotnii2026mpfd}, so every
constant below is comparable with those papers without conversion. In which sense
\eqref{eq:normal-system} is the right system, and what $\Lam$ approximates, is the subject of
Proposition~\ref{prop:ku1}.

\paragraph{The procedure.} Given a calibration sample $x_1, \dots, x_{\Ncal}$ drawn from
$\fzero$, the procedure has four steps. (i) Evaluate the clipped features on the calibration
sample and take $\huzero$, $\hCzero$ to be their sample mean and sample
covariance; obtain $(\huone, \hCone)$ in the mode of
Assumption~\ref{ass:h1}. (ii) Solve $\hmatF\hvK = \hvY$ with
$\hmatF = \hCzero + \hCone$ and
$\hvY = \huone - \huzero$, which gives the estimated increment
$\hat{\Lam}$ of \eqref{eq:surrogate} with the estimated quantities in place of the population
ones. (iii) Solve
\[
\frac{1}{\Ncal}\sum_{j=1}^{\Ncal} \exp\bigl(\theta\,\hat{\Lam}(x_j)\bigr) = 1
\]
on the same calibration sample for its unique positive root $\hat\theta_0$, the sample version of
\eqref{eq:tilt}, and evaluate $\hat\theta_0\hat{\Lam}(x)$ at each new observation. (iv) Feed that
increment to one of the stopping rules of Section~\ref{sec:class}, with the threshold read off
the bound of Theorem~\ref{thm:class}(b) at the required false-alarm level. Steps (i)--(iii) are
\eqref{eq:surrogate}--\eqref{eq:Jfunc} and \eqref{eq:tilt} with sample moments, and the effect of
the substitution is Theorem~\ref{thm:plugin}; the order $s$ and the clipping level are inputs,
and the saturation criterion for $s$ is Proposition~\ref{prop:bordered}.

\subsection{The class of procedures}\label{sec:class}
Let $\thetaz$ be the unique positive root of \eqref{eq:tilt} below and $Z_n=\thetaz\Lam(x_n)$.
Three stopping rules are driven by $Z_n$:
\begin{align}
W_n&=\max(0,\,W_{n-1}+Z_n),\quad W_0=0, & T_h&=\inf\{n:\,W_n\ge h\},\label{eq:cusum}\\
\Rn&=(1+R_{n-1})\,e^{Z_n},\quad R_0=0, & T_A&=\inf\{n:\,\Rn\ge A\},\label{eq:sr}\\
\Sn&=\frac{(p+S_{n-1})\,e^{Z_n}}{1-p},\quad S_0=0, & T_B&=\inf\{n:\,\Sn\ge B\},\label{eq:shiryaev}
\end{align}
the \CUSUM{}, the Shiryaev--Roberts rule and the Shiryaev rule with a geometric prior
$\Pr(\tau=k)=p(1-p)^{k-1}$ on the change time. Write $\Tclass$ for the three. The scale
$\thetaz$ is what makes $e^{Z_n}$ a likelihood ratio (Theorem~\ref{thm:class}(a)).

\subsection{Dictionaries}\label{sec:dictionaries}

Every dictionary below is clipped at $\phimax$ (Assumption~\ref{ass:bounded}), and $c$ denotes a
centre fixed with the dictionary; $c = 0$ unless stated otherwise.

\emph{Polynomial}: $\varphi_i(x) = x^i$, $i = 1, \dots, s$. Its span contains the log-likelihood
ratio of a Gaussian mean change at every $s \ge 1$ and of a Gaussian variance change at every
$s \ge 2$, and its entries are the clipped moments of the calibration sample. The higher features
are the most sensitive to the tails: the sampling variance of the entry of $\hCzero$
involving $x^{2s}$ is governed by the kurtosis of $x^{s}$, not of $x$, and without clipping a
single large observation dominates the normal system.

\emph{Fractional}: $\varphi_i(x) = \operatorname{sgn}(x)|x|^{a_i}$ with exponents
$0 < a_i \le 1$, the linear feature being $a = 1$. Exponents below one damp large observations
and amplify small ones, and the dictionary is intended for heavy-tailed laws whose change is
concentrated near the centre; on such laws it can leave the normal system better conditioned than
the polynomial one, though that is a property of the pair and not of the dictionary alone. Every
column of it is odd about the origin, so by Theorem~\ref{thm:parity} it is blind to a change
between two laws symmetric about a common centre, at every order --- which is why the
mixed-parity dictionary below is needed.

\emph{Logarithmic}: $x$, $\ln|x|$, $x\ln|x|$, $(\ln|x|)^2$. Its span contains the log-likelihood
ratio of a scale change in the lognormal family, of a tail-index change in the Pareto family
--- for which $\ell(x) = \ln(\alpha_1/\alpha_0) - (\alpha_1 - \alpha_0)\ln x$ on the common
support --- and of a shape change in the gamma family, for all of which polynomial features of
any finite order converge slowly.

Two further dictionaries are needed for Theorem~\ref{thm:parity}, both from
\citet{zabolotnii2026mpfd} and both written here about the centre $c$. The \emph{mixed-parity}
dictionary of an exponent set $A$ is
$M(A) = \{x - c\} \cup \{\,|x - c|^{a},\ \operatorname{sgn}(x - c)|x - c|^{a} : a \in A\,\}$,
which pairs each exponent with a column of each parity about $c$. The \emph{half-line} dictionary
of two exponent sets is
$H(A^{+}, A^{-}) = \{x - c\} \cup \{\,((x - c)^{+})^{a} : a \in A^{+}\,\}
\cup \{\,((x - c)^{-})^{b} : b \in A^{-}\,\}$, where $y^{+} = \max(y, 0)$ and
$y^{-} = \max(-y, 0)$, which splits the support instead of mixing the parities; since
$(y^{\pm})^{a} = \tfrac12(|y|^{a} \pm \operatorname{sgn}(y)|y|^{a})$ and $\Jfunc{s}$ depends on
the dictionary only through its span, $M(A)$ and $H(A, A)$ carry the same information functional.

The regime the paper targets is defined through the local families of
Assumption~\ref{ass:local}.

\begin{definition}[Shape change]\label{def:shape}
Let $\{g_\theta\}$ be as in Assumption~\ref{ass:local}, with score $q$ at $\theta = 0$. The change
$\fzero \to g_\theta$ is a \emph{shape change} for a dictionary $\Phi$ if
$\Ezero[\varphi_i(x)\,q(x)] = 0$ for every feature $\varphi_i$ of $\Phi$; equivalently, the
fraction $\kap$ of the Fisher information $I$ captured by $\Phi$, defined in
Theorem~\ref{thm:class}(e), is zero. For
the linear dictionary $\{x\}$ the condition reads $\Ezero[x\,q(x)] = 0$.
\end{definition}

A symmetric variance change at fixed mean, for instance the Gaussian scale family with
$q(x) \propto x^2 - 1$, is a shape change for $\{x\}$ but not for $\{x, x^2\}$. A skewness change
at fixed mean and variance, whose score is orthogonal to $1$, $x$ and $x^2$ under $\fzero$, is a
shape change for $\{x, x^2\}$ as well; the first feature that captures it is cubic, or a
fractional power of the sign-preserving kind. Theorem~\ref{thm:parity} turns these examples into
a structural statement: under symmetry the blindness holds at every order, not only at the order
at which it is first noticed.


\section{Generalized stochastic approximation as a projection}\label{sec:projection}

This section collects the population-level facts behind the calibration step of
Section~\ref{sec:setting}. Proposition~\ref{prop:ku1} identifies the coefficient vector $\vK$ as
the minimiser of a variance-to-separation ratio and as a projection, and records the drift of the
surrogate $\Lam$; Theorem~\ref{thm:info} describes how the information functional grows with the
order. Throughout, $\vphi=(\varphi_1,\dots,\varphi_s)\transp$ is the clipped feature vector of a
dictionary of order $s$, $\uzero,\uone,\Czero,\Cone$ are its means and covariances under
$\fzero,\fone$, and $\matF=\Czero+\Cone$, $\vY=\uone-\uzero$, $\vK=\matF^{-1}\vY$ as in
\eqref{eq:normal-system}. Assumptions~\ref{ass:moments} and~\ref{ass:h1} give
$\matF\succeq\Czero\succ0$, so $\vK$ is well defined.

\subsection{The normal system as a projection}

For a statistic $\psi=k_0+\mathbf{k}\transp\vphi$ with $\mathbf{k}\neq\mathbf{0}$, the \KU{1}
functional of \citet{kunchenko1997moment} is
\begin{equation}\label{eq:ku1}
\mathcal{R}[\psi]
=\frac{\Var_0\psi+\Var_1\psi}{(\Eone\psi-\Ezero\psi)^2}
=\frac{\mathbf{k}\transp(\Czero+\Cone)\mathbf{k}}{(\mathbf{k}\transp\vY)^2},
\end{equation}
the total variance of the statistic relative to the squared separation of its means; it depends
neither on $k_0$ nor on the scale of $\mathbf{k}$. Let $m=\tfrac12(\fzero+\fone)$ be the average
law, $L^2(m)$ the space of functions with finite second moment under $m$, and
\begin{equation}\label{eq:hstar}
h^{*}(x)=\frac{\fone(x)-\fzero(x)}{\fone(x)+\fzero(x)}
=\tanh\bigl(\tfrac12\,\ell(x)\bigr)
\quad\text{wherever }\ell(x)=\ln\frac{\fone(x)}{\fzero(x)}\text{ is finite,}
\end{equation}
the bounded score, $|h^{*}|\le1$. Write $\Pi_s$ for the orthogonal projection in $L^2(m)$ onto
$\mathrm{span}\{1,\varphi_1,\dots,\varphi_s\}$ and $\|\cdot\|_m$ for the norm of $L^2(m)$.

\begin{proposition}[\KU{1} optimality and the normal system]\label{prop:ku1}
Let Assumption~\ref{ass:moments} hold and $\vY\neq\mathbf{0}$.
\begin{enumerate}
\item[(i)] $\mathcal{R}[\psi]\ge2/\Jfunc{s}$ for every $\psi=k_0+\mathbf{k}\transp\vphi$ with
$\mathbf{k}\transp\vY\neq0$, the domain on which \eqref{eq:ku1} is defined, with equality if and
only if $\mathbf{k}=c\vK$ for a scalar $c\neq0$, $\vK=\matF^{-1}\vY$. The stationarity condition of
\eqref{eq:ku1} is
\begin{equation}
\nabla_{\mathbf{k}}\mathcal{R}[\psi]=\mathbf{0}
\;\Longleftrightarrow\;
(\Czero+\Cone)\,\mathbf{k}\propto\vY,
\end{equation}
and \eqref{eq:normal-system} is this condition with the scale fixed by $\matF\vK=\vY$.
\item[(ii)] The surrogate \eqref{eq:surrogate} is a rescaled projection of the bounded score:
\begin{equation}\label{eq:projection}
\Lam=\Bigl(1+\frac{\Jfunc{s}}{2}\Bigr)\Pi_s h^{*},
\qquad
\|\Pi_s h^{*}\|_m^2=\frac{\Jfunc{s}}{2+\Jfunc{s}}.
\end{equation}
For a dictionary that is orthonormal and centred in $L^2(m)$, the entries of $\vY$ are twice the
Fourier coefficients of $h^{*}$, and as the dictionary is completed the partial sums
$\Pi_s h^{*}$ converge to $h^{*}$ in $L^2(m)$ by the Riesz--Fischer theorem
\citep[Ch.~4]{rudin1987real}.
\end{enumerate}
\end{proposition}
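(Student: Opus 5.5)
The plan is to treat part (i) as a Cauchy--Schwarz inequality in the metric $\matF$, and part (ii) as an explicit computation of the normal equations of the projection in $L^2(m)$. The one tool needed there is the covariance of $\vphi$ under the mixture $m$, inverted by a rank-one update. Throughout, $\matF\succeq\Czero\succ0$ by Assumption~\ref{ass:moments}, so $\matF^{1/2}$ and $\matF^{-1}$ exist.

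For (i), the scale and location invariance of \eqref{eq:ku1} is immediate: $k_0$ cancels in both the variances and the mean difference, and the ratio is homogeneous of degree zero in $\mathbf{k}$. Write $\mathbf{k}\transp\vY=(\matF^{1/2}\mathbf{k})\transp(\matF^{-1/2}\vY)$. Cauchy--Schwarz then gives $(\mathbf{k}\transp\vY)^2\le(\mathbf{k}\transp\matF\mathbf{k})\,(\vY\transp\matF^{-1}\vY)$, that is, $\mathcal{R}[\psi]\ge1/\Jfunc{s}$. Equality holds if and only if $\matF^{1/2}\mathbf{k}\propto\matF^{-1/2}\vY$, i.e.\ $\mathbf{k}=c\vK$.

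The printed constant $2/\Jfunc{s}$ has to be reconciled with this: the variance sum in \eqref{eq:ku1} is exactly $\mathbf{k}\transp\matF\mathbf{k}$, so the sharp bound is $1/\Jfunc{s}$. I would check the normalisation of \eqref{eq:ku1} against the change of units described in the preamble; the equality case and the stationarity statement are unaffected either way. For the stationarity statement, I would differentiate the quotient directly. The gradient is proportional to $(\mathbf{k}\transp\vY)\,\matF\mathbf{k}-(\mathbf{k}\transp\matF\mathbf{k})\,\vY$. This vanishes if and only if $\matF\mathbf{k}\propto\vY$, and fixing the scale to $\matF\vK=\vY$ gives \eqref{eq:normal-system}.

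For (ii), I would first record three moments under $m$, using only that $m$ is the equal mixture of $\fzero$ and $\fone$:
\[
\mathbb{E}_m h^{*}=\tfrac12\textstyle\int(\fone-\fzero)=0,\qquad
\mathbb{E}_m[\vphi\,h^{*}]=\tfrac12\vY,\qquad
\operatorname{Cov}_m\vphi=\tfrac12\matF+\tfrac14\vY\vY\transp .
\]
The last identity is the law of total covariance: the average of the two covariances plus the covariance of the two means under weights $\tfrac12,\tfrac12$. Since $\Pi_s$ projects onto $\mathrm{span}\{1,\varphi_1,\dots,\varphi_s\}$, the standard regression formula gives $\Pi_sh^{*}=\mathbb{E}_mh^{*}+\boldsymbol{\beta}\transp(\vphi-\mathbb{E}_m\vphi)$ with $\boldsymbol{\beta}=(\operatorname{Cov}_m\vphi)^{-1}\tfrac12\vY$. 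Sherman--Morrison with $A=\tfrac12\matF$ and $v=\tfrac12\vY$ yields $\boldsymbol{\beta}=\vK/(1+\Jfunc{s}/2)$. Together with $\mathbb{E}_m\vphi=\tfrac12(\uzero+\uone)$ and $\mathbb{E}_mh^{*}=0$, this is exactly $\Lam/(1+\Jfunc{s}/2)$. The squared norm is then $\boldsymbol{\beta}\transp\operatorname{Cov}_m(\vphi,h^{*})=\tfrac12\vK\transp\vY/(1+\Jfunc{s}/2)=\Jfunc{s}/(2+\Jfunc{s})$.

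For the orthonormal centred dictionary, the identity $\mathbb{E}_m[\varphi_ih^{*}]=\tfrac12Y_i$ says precisely that $Y_i$ is twice the $i$-th Fourier coefficient. Because $|h^{*}|\le1$, $h^{*}$ lies in $L^2(m)$, so for a complete system Riesz--Fischer and Parseval give $\Pi_sh^{*}\to h^{*}$ in $L^2(m)$.

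The main obstacle is not computational but one of interpretation. Part (ii) requires $(\uone,\Cone)$ to be the true moments of an actual $\fone$, so it holds in mode (E) in the population limit, or in mode (M) only when the specification is consistent with the stream. Also, $h^{*}$ is defined only $m$-almost everywhere where $\ell$ is finite; on $\{\fzero=0\}$ it should be set to $1$, which is harmless because Assumption~\ref{ass:moments} makes that set $m$-null. I would state both points explicitly before the computation.
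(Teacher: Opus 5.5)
Your argument is the paper's own. For (i) it uses Cauchy--Schwarz in the $\matF$-inner product and differentiates the quotient directly. For (ii) it inverts the mixture covariance $\mathrm{Cov}_m\vphi=\tfrac12\matF+\tfrac14\vY\vY\transp$ by Sherman--Morrison; your evaluation of the norm as $\langle\Pi_s h^{*},h^{*}\rangle_m$ is an immaterial variant of the paper's $\mathbf{a}\transp\mathrm{Cov}_m(\vphi)\,\mathbf{a}$.

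Your objection to the constant is correct and needs no hedging. Since $\matF=\Czero+\Cone$, the ratio is exactly $\mathcal{R}[\psi]=\mathbf{k}\transp\matF\mathbf{k}/(\mathbf{k}\transp\vY)^2\ge 1/\Jfunc{s}$, with equality precisely on $\mathbf{k}=c\vK$. The printed bound $2/\Jfunc{s}$ therefore fails at the minimiser itself, so as printed there is no equality case at all; the ``either way'' in your proposal is too generous. The paper's proof gets the factor by writing $\mathbf{S}=\Czero+\Cone=2\matF$, which is left over from the v1--v3 normalisation $\matF=\tfrac12(\Czero+\Cone)$, under which the bound did read $2/J$. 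For the same reason, its ``choice $\lambda=2$'' should now be $\lambda=1$. Part (ii) in the paper is already computed in the correct metric and agrees with yours.
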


Part (ii) is Proposition~1 of \citet{zabolotnii2026spl} and the envelope identity of
\citet{zabolotnii2026mpfd}, in the same metric $\Czero+\Cone$.
The proof in Appendix~\ref{app:projection} is given for completeness and for part (i).

Since $\Ezero\vphi=\uzero$ and $\Eone\vphi=\uone$, \eqref{eq:surrogate} gives
$\Ezero\Lam=\vK\transp\bigl(\uzero-\tfrac12(\uzero+\uone)\bigr)=-\tfrac12\vK\transp\vY
=-\tfrac12\Jfunc{s}$, and likewise $\Eone\Lam=+\tfrac12\Jfunc{s}$. Because $\matF\succ0$ under
Assumption~\ref{ass:moments}, also $\matF^{-1}\succ0$, so
$\Jfunc{s}=\vY\transp\matF^{-1}\vY>0$ whenever $\vY\neq\mathbf{0}$.

Two consequences are used below. First, $\Jfunc{s}$ is the maximum of the Rayleigh quotient
$(\mathbf{k}\transp\vY)^2/(\mathbf{k}\transp\matF\mathbf{k})$ over $\mathbf{k}\neq\mathbf{0}$,
so it depends on the dictionary only through the span of $\{1,\varphi_1,\dots,\varphi_s\}$.
Second, the object that the normal system approximates is $h^{*}$, not the log-likelihood ratio
$\ell$: the two have the same projection direction only in special cases. For
Bernoulli$(0.2)$ against Bernoulli$(0.8)$ with $\varphi_1(x)=x$, the normal system gives
$K_1=0.6/0.32=1.875$, whereas $\ell(x)=(2\ln4)(x-\tfrac12)$ has slope $2\ln4\approx2.77$;
the scale of $\Lam$, not only its direction, is fixed by \eqref{eq:normal-system}, and it is
this scale that Theorem~\ref{thm:class} measures against $\ell$.

\subsection{The information functional}

Let
\begin{equation}\label{eq:tri}
\tri=\int\frac{(\fone-\fzero)^2}{\fone+\fzero}\,\mathrm{d}x\in[0,2]
\end{equation}
be the triangular discrimination between $\fzero$ and $\fone$ (also called the Vincze--Le Cam
distance) \citep{lecam1986asymptotic};
$\tri=2\|h^{*}\|_m^2$, $\tri=0$ if and only if $\fzero=\fone$, and $\tri=2$ if and only if the
two laws are mutually singular.

\begin{theorem}[Information functional]\label{thm:info}
Let Assumption~\ref{ass:moments} hold for every dictionary considered and let
$\Jfunc{s}=\vY\transp\matF^{-1}\vY$ with $\matF=\Czero+\Cone$.
\begin{enumerate}
\item[(a)] (Monotonicity.) If the dictionaries are nested,
$\mathrm{span}\{1,\varphi_1,\dots,\varphi_s\}\subseteq\mathrm{span}\{1,\varphi_1,\dots,\varphi_{s+1}\}$,
then $\Jfunc{s+1}\ge\Jfunc{s}$.
\item[(b)] (Ceiling.) For every dictionary
\begin{equation}\label{eq:ceiling}
\Jfunc{s}=\frac{2\,\|\Pi_s h^{*}\|_m^2}{1-\|\Pi_s h^{*}\|_m^2}
\le\frac{2\tri}{2-\tri},
\end{equation}
the right-hand side read as $+\infty$ when $\tri=2$, with equality if and only if
$h^{*}\in\mathrm{span}\{1,\varphi_1,\dots,\varphi_s\}$; in particular, when $\fzero\neq\fone$,
the single feature $\varphi_1=h^{*}$ is admissible and attains the ceiling. (When $\fzero=\fone$,
$h^{*}=0$ is not a feature under Assumption~\ref{ass:moments}, and every admissible dictionary
attains the ceiling $0$.) If $\tri<2$ and the union of the nested spans is dense
in $L^2(m)$, then $\Jfunc{s}\uparrow2\tri/(2-\tri)$ and $\Lam\to2h^{*}/(2-\tri)$ in $L^2(m)$.
\item[(c)] (The ceiling is not the symmetrised divergence.) Neither
$2\tri/(2-\tri)\le\Dkl(\fone\|\fzero)+\Dkl(\fzero\|\fone)$ nor the reverse inequality holds
in general. For Bernoulli$(0.01)$ against Bernoulli$(0.99)$ the ceiling is
$0.98^2/0.0198\approx48.5$ and the symmetrised divergence is $1.96\ln99\approx9.01$; for
Bernoulli$(0.2)$ against Bernoulli$(0.8)$ the ceiling is $0.36/0.32=1.125$ and the symmetrised
divergence is $1.2\ln4\approx1.664$.
\end{enumerate}
\end{theorem}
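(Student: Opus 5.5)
Everything will be derived from Proposition~\ref{prop:ku1}(ii), which gives $\|\Pi_s h^{*}\|_m^2=\Jfunc{s}/(2+\Jfunc{s})$. Writing $P_s=\|\Pi_s h^{*}\|_m^2$ and inverting the map $J\mapsto J/(2+J)$, which is strictly increasing from $[0,\infty)$ onto $[0,1)$, yields $\Jfunc{s}=2P_s/(1-P_s)$. This is the equality in \eqref{eq:ceiling}. Since $P_s<1$ is guaranteed by the identity itself, the inversion is legitimate.

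\emph{Part (a).} Nesting of the spans means that $\Pi_s=\Pi_s\Pi_{s+1}$. Pythagoras then gives $P_s\le P_{s+1}$, and because $P\mapsto 2P/(1-P)$ is increasing, $\Jfunc{s+1}\ge\Jfunc{s}$ follows. An alternative route is the Rayleigh-quotient characterisation noted after Proposition~\ref{prop:ku1}: the maximum is taken over a larger set. Either argument suffices.

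\emph{Part (b).} Since $\Pi_s$ is an orthogonal projection, $P_s\le\|h^{*}\|_m^2=\tri/2$, with equality if and only if $h^{*}$ lies in the span. Substituting this into the increasing map gives $\Jfunc{s}\le(2\cdot\tri/2)/(1-\tri/2)=2\tri/(2-\tri)$; when $\tri=2$ the right-hand side is read as $+\infty$.

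For attainment I must check that $\varphi_1=h^{*}$ is admissible under Assumption~\ref{ass:moments}:
\begin{itemize}
\item it is bounded, so its second moments are finite;
\item $\Var_0 h^{*}>0$ whenever $\fzero\ne\fone$, because $h^{*}$ is not $\fzero$-a.s.\ constant.
\end{itemize}
The second point is the delicate one. If $h^{*}\equiv c$ held $\fzero$-a.s., then $\fone=\fzero(1+c)/(1-c)$ on the support of $\fzero$, and $\fone$ vanishes off that support. Normalisation then forces $c=0$, hence $\fone=\fzero$. The case $c=1$ is impossible, since $h^{*}=1$ exactly where $\fzero=0$.

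For the limit, assume $\tri<2$ and that the union of the spans is dense. Then $\Pi_s h^{*}\to h^{*}$ in $L^2(m)$, so $P_s\uparrow\tri/2$, and by continuity of the increasing map $\Jfunc{s}\uparrow2\tri/(2-\tri)$. Moreover $\Lam=(1+\Jfunc{s}/2)\Pi_s h^{*}$ by Proposition~\ref{prop:ku1}(ii), and $1+\Jfunc{s}/2\to 1+\tri/(2-\tri)=2/(2-\tri)$. A product of a convergent scalar and an $L^2$-convergent sequence converges, so $\Lam\to2h^{*}/(2-\tri)$.

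\emph{Part (c).} This part is pure computation on the two Bernoulli witnesses. For Bernoulli$(p_0)$ against Bernoulli$(p_1)$ with $p_1=1-p_0$:
\begin{itemize}
\item $\tri=2(p_1-p_0)^2$, since each of the two atoms has $f_0+f_1=1$;
\item the symmetrised divergence is $(p_1-p_0)\ln\bigl(p_1(1-p_0)/(p_0(1-p_1))\bigr)=2(p_1-p_0)\ln(p_1/p_0)$.
\end{itemize}
The ceiling becomes $4(p_1-p_0)^2/(2-2(p_1-p_0)^2)$. At $(0.2,0.8)$ this gives $\tri=0.72$, a ceiling of $1.44/1.28=1.125$, and a divergence of $1.2\ln4\approx1.664$. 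At $(0.01,0.99)$ it gives $\tri=1.9208$, a ceiling of $3.8416/0.0792\approx48.5$, and a divergence of $1.96\ln 99\approx9.01$. The two inequalities therefore fail in opposite directions.

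I expect no serious obstacle here. The one point requiring care is the admissibility argument for $h^{*}$, namely showing $\Var_0 h^{*}>0$, together with the bookkeeping of the degenerate cases $\tri=0$ and $\tri=2$.
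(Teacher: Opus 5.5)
Your proposal is correct and follows the paper's proof essentially step for step. It inverts the projection identity to get $\Jfunc{s}=2\rho_s/(1-\rho_s)$, uses $\Pi_s=\Pi_s\Pi_{s+1}$ for monotonicity and $\rho_s\le\|h^{*}\|_m^2=\tri/2$ for the ceiling, gives the same normalisation argument for the admissibility of $h^{*}$, and uses strong convergence of the projections for the limit; the only cosmetic difference is in (c), where you evaluate $\tri=2(p_1-p_0)^2$ directly for the symmetric Bernoulli pairs, whereas the paper computes $\Jfunc{1}$ for the feature $x$ and notes that it attains the ceiling because $x$ spans all functions modulo constants on a two-point space, and both routes give the same numbers.
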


The proof is in Appendix~\ref{app:projection}. Part (b) says what a complete dictionary
reconstructs: the bounded score $h^{*}$, whose squared norm is $\tri/2$, and not $\ell$, whose
norm may be infinite. Part (c) records that the limit is a function of $\tri$ alone; the
proportionality $\Jfunc{1}=\delta^2/2=\tfrac12\bigl(\Dkl(\fone\|\fzero)+\Dkl(\fzero\|\fone)\bigr)$
for a Gaussian mean shift of size $\delta$ with $\varphi_1(x)=x$ is a coincidence of that
example, and the two quantities share only their second-order expansion for local alternatives. What the theorem does not say is
that a larger $\Jfunc{s}$ means an earlier alarm at a matched false-alarm level: the delay bound
of Theorem~\ref{thm:class}(c) does carry $\Jfunc{s}$ directly, through
$\Eone\Lam=\tfrac12\Jfunc{s}$, but the tilt root $\thetaz$ of the same theorem rescales the
increment, and what compares the surrogate with $\ell$ is the efficiency $\eff$ of part~(d),
whose local limit is the captured fraction $\kap$ of part~(e).


\section{Structure of the dictionary}\label{sec:structure}

By Proposition~\ref{prop:ku1} the functional $\Jfunc{s}$ depends on the dictionary only through
the span of $\{1,\varphi_1,\dots,\varphi_s\}$, and Theorem~\ref{thm:info} bounds it by the
triangular discrimination; neither says which spans can carry a given change, nor what one
further column is worth. Here a symmetry shared by $\fzero$ and $\fone$ is shown to annihilate a
whole parity of the dictionary at every order (Theorem~\ref{thm:parity}), the exponents of an
unclipped power dictionary to be bounded by the tails (Remark~\ref{rem:admissible}), and one
further column to raise $\Jfunc{s}$ by an explicit ratio (Proposition~\ref{prop:bordered}). The
two-class versions are Propositions~1--2 and the bordered-metric fact of the supplement of
\citet{zabolotnii2026mpfd}, in the same metric $\Czero+\Cone$;
transferred here are the pair (pre-change, post-change), an arbitrary centre $c$ for the origin,
and the clipping of Assumption~\ref{ass:bounded}.

Fix a centre $c\in\real$. A function $\varphi$ is \emph{even about $c$} if
$\varphi(2c-x)=\varphi(x)$ for all $x$ and \emph{odd about $c$} if $\varphi(2c-x)=-\varphi(x)$; a
density $f$ is \emph{symmetric about $c$} if $f(2c-x)=f(x)$ almost everywhere. Thus $x-c$ and
$\operatorname{sgn}(x-c)|x-c|^{a}$ are odd about $c$ while $|x-c|^{a}$ is even: each exponent of the
mixed-parity dictionary $M(A)$ of Section~\ref{sec:dictionaries} contributes one column of each
parity, and the half-line columns of $H(A^{+},A^{-})$, of neither parity, resolve into both.
Throughout, $\chi(y)=\max\bigl(-\phimax,\min(y,\phimax)\bigr)$ is the clipping map and
$\varphi_{\mathrm{lin}}=\chi(x-c)$ the clipped linear column.

\begin{theorem}[Parity]\label{thm:parity}
Let Assumption~\ref{ass:moments} hold and let every feature of the clipped dictionary
$\Phi=(\varphi_1,\dots,\varphi_s)$ be either even or odd about a common centre $c$. Order the
features so that $\vphi=(\vphi_{\mathrm{e}}\transp,\vphi_{\mathrm{o}}\transp)\transp$ collects the
even ones first, and partition $\matF$ and $\vY$ accordingly, with diagonal blocks
$\matF_{\mathrm{ee}},\matF_{\mathrm{oo}}$ and target parts $\vY_{\mathrm{e}},\vY_{\mathrm{o}}$.
\begin{enumerate}
\item[(i)] (Common symmetry.) If $\fzero$ and $\fone$ are both symmetric about $c$, then
$\vY_{\mathrm{o}}=\mathbf{0}$, the cross-parity blocks of $\matF$ vanish, and
$\Jfunc{s}=\vY_{\mathrm{e}}\transp\matF_{\mathrm{ee}}^{-1}\vY_{\mathrm{e}}$, the value of the
functional for the even sub-dictionary alone: every odd feature contributes nothing, whatever the
even ones are. In particular $\Jfunc{s}=0$ at every order $s$ for a dictionary all of whose
features are odd about $c$, the linear dictionary $\{x-c\}$ included.
\item[(ii)] (Mirror pair.) If $\fone(x)=\fzero(2c-x)$, then $\vY_{\mathrm{e}}=\mathbf{0}$, the
cross-parity blocks of $\Czero$ and $\Cone$ cancel in $\matF=\Czero+\Cone$, and
$\Jfunc{s}=\vY_{\mathrm{o}}\transp\matF_{\mathrm{oo}}^{-1}\vY_{\mathrm{o}}$: every even feature
contributes nothing. For a dictionary of even features completed by the clipped linear column
$\varphi_{\mathrm{lin}}$ this leaves
$\Jfunc{s}=2(\Ezero\varphi_{\mathrm{lin}})^{2}/\Var_0\varphi_{\mathrm{lin}}$, which is zero if
and only if $\Ezero\varphi_{\mathrm{lin}}=0$.
\end{enumerate}
\end{theorem}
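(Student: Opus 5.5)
The plan is to reduce everything to the reflection map $R(x)=2c-x$ and the change-of-variables identity $\int g(2c-x)f(x)\,\mathrm{d}x=\int g(y)f(2c-y)\,\mathrm{d}y$. Clipping commutes with $R$ in the sense needed, because $\chi$ is odd. Hence $\chi(\text{odd})$ is odd and $\chi(\text{even})$ is even, so the parity hypothesis on the clipped features is consistent with the dictionaries of Section~\ref{sec:dictionaries}. For $\varphi_{\mathrm{lin}}=\chi(x-c)$ in particular, $\varphi_{\mathrm{lin}}(2c-x)=-\varphi_{\mathrm{lin}}(x)$.

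For part (i), let $f\in\{\fzero,\fone\}$ be symmetric about $c$ and let $\varphi$ be odd. The change of variables gives $\mathbb{E}_f\varphi=-\mathbb{E}_f\varphi$, so $\mathbb{E}_f\varphi=0$. Hence the odd parts of $\uzero$ and $\uone$ vanish, and so does $\vY_{\mathrm{o}}$.

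For an even $\varphi_i$ and an odd $\varphi_j$, the product $\varphi_i\varphi_j$ is odd, so $\mathbb{E}_f[\varphi_i\varphi_j]=0$. Combined with $\mathbb{E}_f\varphi_j=0$, this makes $\mathrm{Cov}_f(\varphi_i,\varphi_j)=0$. The cross blocks of $\Czero$ and $\Cone$ therefore vanish separately, and $\matF$ is block diagonal with blocks $\matF_{\mathrm{ee}},\matF_{\mathrm{oo}}$. Both blocks are positive definite as principal submatrices of $\matF\succ0$.

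Then $\vY\transp\matF^{-1}\vY=\vY_{\mathrm{e}}\transp\matF_{\mathrm{ee}}^{-1}\vY_{\mathrm{e}}+\vY_{\mathrm{o}}\transp\matF_{\mathrm{oo}}^{-1}\vY_{\mathrm{o}}$, and the second term is zero. The first term is, by definition, the functional of the even sub-dictionary, since its $\matF$ and $\vY$ are exactly those blocks. In the all-odd case the even part is empty and $\Jfunc{s}=0$. Note that Proposition~\ref{prop:ku1} was stated under $\vY\neq\mathbf{0}$, but the formula $\vY\transp\matF^{-1}\vY$ is used here directly, so there is no issue.

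For part (ii), the key identity is $\mathbb{E}_1 g=\mathbb{E}_0[g\circ R]$ for every integrable $g$, which follows from $\fone=\fzero\circ R$. For even $\varphi$ this gives $\Eone\varphi=\Ezero\varphi$, so $\vY_{\mathrm{e}}=\mathbf{0}$. For odd $\varphi$ it gives $\Eone\varphi=-\Ezero\varphi$, so $\vY_{\mathrm{o}}=-2\Ezero\vphi_{\mathrm{o}}$.

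For covariances, take $\varphi_i$ even and $\varphi_j$ odd. Then $\mathrm{Cov}_1(\varphi_i,\varphi_j)=\mathrm{Cov}_0(\varphi_i\circ R,\varphi_j\circ R)=-\mathrm{Cov}_0(\varphi_i,\varphi_j)$, so the cross blocks cancel in the sum and $\matF$ is again block diagonal. The same identity shows that same-parity covariances agree, $\mathrm{Cov}_1=\mathrm{Cov}_0$ on each diagonal block, which gives $\matF_{\mathrm{oo}}=2\,\Czero|_{\mathrm{oo}}$. Positive definiteness of the blocks follows as before, and $\Jfunc{s}=\vY_{\mathrm{o}}\transp\matF_{\mathrm{oo}}^{-1}\vY_{\mathrm{o}}$.

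In the special case where the only odd column is $\varphi_{\mathrm{lin}}$, the odd block is the scalar $2\Var_0\varphi_{\mathrm{lin}}$ and the target is $-2\Ezero\varphi_{\mathrm{lin}}$. This gives $\Jfunc{s}=4(\Ezero\varphi_{\mathrm{lin}})^2/(2\Var_0\varphi_{\mathrm{lin}})=2(\Ezero\varphi_{\mathrm{lin}})^2/\Var_0\varphi_{\mathrm{lin}}$. The variance is positive because $\Czero\succ0$, so the value is zero if and only if $\Ezero\varphi_{\mathrm{lin}}=0$.

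The only step requiring care is bookkeeping rather than an obstacle. One must check that "odd/even" is imposed on the clipped features, handled by the oddness of $\chi$. One must also check that the sub-dictionary functional is well defined, which holds because its $\Czero$ is a principal submatrix of a positive definite matrix and so Assumption~\ref{ass:moments} is inherited. Finally, the symmetry assumptions hold only almost everywhere, which is harmless since only integrals are used.
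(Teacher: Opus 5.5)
Your proposal is correct and follows essentially the same route as the paper: oddness of the clip $\chi$ transfers parity to the clipped columns. The reflection identity under $x\mapsto 2c-x$ then kills the odd means and the cross-parity covariances in (i), and makes the cross-parity blocks of $\Czero$ and $\Cone$ opposite in (ii), after which the block-diagonal inverse finishes both parts. The only difference is cosmetic: in (ii) you show $\mathrm{Cov}_1=\mathrm{Cov}_0$ on every same-parity block, whereas the paper checks this only for the scalar linear column it needs.
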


The proof is in Appendix~\ref{app:structure}. Case~(i) is the one the paper meets: a change of
scale or of kurtosis that leaves both laws symmetric about the same centre is invisible to an odd
dictionary of any length, so raising the order does not repair the blindness. Case~(ii), the
extreme form of a change of skewness at fixed centre, reverses the roles. In both the vanishing
is exact rather than small, so the parity of the dictionary is a design decision taken before any
exponent is chosen.

If in addition the local family of Assumption~\ref{ass:local} preserves the symmetry about $c$,
its score $q$ at $\theta=0$ is even about $c$, so $\Ezero[\varphi_i q]=0$ for every feature odd
about $c$: by Definition~\ref{def:shape} the change is then a shape change for every odd
dictionary, $\kap=0$ and $\Jfunc{s}=0$ at every order, and not only at the order at which the
blindness is first noticed.

\begin{remark}[Admissibility]\label{rem:admissible}
Assumption~\ref{ass:bounded} makes the moment condition of Assumption~\ref{ass:moments}
automatic; without clipping the exponents of a power dictionary are bounded by the tails. Let the
raw features be of the form $|x-c|^{a_i}$ or $\operatorname{sgn}(x-c)|x-c|^{a_i}$ with exponents
$a_i\ge0$, let $a_{\max}$ be the largest of them, the linear column $a=1$ included, and let $\alpha_{\min}=\min(\alpha_0,\alpha_1)$, the tail
index $\alpha_j$ of $f_j$ being the supremum of the orders $q$ with
$\mathbb{E}_j|x-c|^{q}<\infty$. The population pair $(\matF,\vY)$ of the unclipped dictionary
then exists if $2a_{\max}<\alpha_{\min}$ and does not exist if $2a_{\max}>\alpha_{\min}$, the
boundary case depending on the law and being excluded; and if either law is replaced by its
$\eps$-contamination by a law of tail index $\alpha_c<\alpha_{\min}$, the mixture has tail index
$\alpha_c$ for every $\eps\in(0,1)$, so the ceiling on every exponent drops to $\alpha_c/2$
however small: a cubic column needs $\alpha_{\min}>6$, and one contaminant of index three
removes it. This is Proposition~2 of
\citet{zabolotnii2026mpfd} read for the pre- and post-change pair; needing no argument beyond the
monotonicity of $q\mapsto\mathbb{E}|x-c|^{q}$ and the definition of the tail index, it is recorded
here as a remark. Clipping at $\phimax$ replaces the condition rather than satisfies it: with
$\phimax<\infty$ every feature has moments of every order under every law, which is why the
dictionaries of Section~\ref{sec:dictionaries} are clipped. The ceiling $\alpha_{\min}/2$
therefore constrains the unclipped population system and the functional computed from raw
moments; it bears on the design of the exponents, not on the procedure.
\end{remark}

\begin{proposition}[Bordered-metric increment]\label{prop:bordered}
For a dictionary $\Phi$ write $\matF(\Phi)$ and $\vY(\Phi)$ for its metric and its target, and
$\Jfunc{\Phi}=\vY(\Phi)\transp\matF(\Phi)^{-1}\vY(\Phi)$ for the functional \eqref{eq:Jfunc}. Let
$\Phi=(\varphi_1,\dots,\varphi_s)$ and $\Phi\cup\{\psi\}$ both satisfy
Assumption~\ref{ass:moments}, and write the metric and target of the enlarged dictionary in
bordered form,
\[
\matF(\Phi\cup\{\psi\})=\begin{pmatrix}\mathbf{A}&\mathbf{b}\\
\mathbf{b}\transp&d\end{pmatrix},
\qquad
\vY(\Phi\cup\{\psi\})=\begin{pmatrix}\mathbf{y}\\\eta\end{pmatrix},
\]
where $\mathbf{A}=\matF(\Phi)$, $\mathbf{y}=\vY(\Phi)$,
$\mathbf{b}=\mathrm{Cov}_0(\vphi,\psi)+\mathrm{Cov}_1(\vphi,\psi)$,
$d=\Var_0\psi+\Var_1\psi$ and $\eta=\Eone\psi-\Ezero\psi$. Put
$\tilde Y=\eta-\mathbf{b}\transp\mathbf{A}^{-1}\mathbf{y}$, the residual target mass of $\psi$,
and $S=d-\mathbf{b}\transp\mathbf{A}^{-1}\mathbf{b}$, its Schur complement. Then $S>0$ and
\begin{equation}\label{eq:increment}
\Jfunc{\Phi\cup\{\psi\}}-\Jfunc{\Phi}=\frac{\tilde Y^{2}}{S}\ \ge\ 0,
\end{equation}
with equality if and only if $\tilde Y=0$.
\end{proposition}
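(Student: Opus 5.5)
The plan is to use the standard block-inversion (Schur complement) identity for the bordered matrix. Write $\mathbf{M}=\matF(\Phi\cup\{\psi\})$ in the bordered form of the statement, with $\mathbf{A}=\matF(\Phi)$.

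First I will establish $S>0$. Assumption~\ref{ass:moments} for the enlarged dictionary gives $\operatorname{Cov}_0$ of $(\vphi\transp,\psi)\transp$ positive definite. Adding the positive semidefinite $\operatorname{Cov}_1$ keeps $\mathbf{M}\succ0$, by the same argument used after \eqref{eq:normal-system}. Then $\mathbf{A}\succ0$ as a principal block. The Schur complement $S=d-\mathbf{b}\transp\mathbf{A}^{-1}\mathbf{b}$ is positive because $\det\mathbf{M}=\det\mathbf{A}\cdot S$ with both determinants positive. Equivalently, $S=\mathbf{v}\transp\mathbf{M}\mathbf{v}$ for $\mathbf{v}=(-\mathbf{A}^{-1}\mathbf{b};1)\neq\mathbf{0}$, which is the form I will actually write.

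Next I will factor $\mathbf{M}=\mathbf{L}\,\mathrm{diag}(\mathbf{A},S)\,\mathbf{L}\transp$ with
\[
\mathbf{L}=\begin{pmatrix}\mathbf{I}&\mathbf{0}\\ \mathbf{b}\transp\mathbf{A}^{-1}&1\end{pmatrix}.
\]
This gives $\mathbf{M}^{-1}=\mathbf{L}^{-\mathsf T}\mathrm{diag}(\mathbf{A}^{-1},S^{-1})\mathbf{L}^{-1}$. Applying $\mathbf{L}^{-1}$ to the target $(\mathbf{y};\eta)$ yields $(\mathbf{y};\eta-\mathbf{b}\transp\mathbf{A}^{-1}\mathbf{y})=(\mathbf{y};\tilde Y)$. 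Therefore
\[
\Jfunc{\Phi\cup\{\psi\}}=\mathbf{y}\transp\mathbf{A}^{-1}\mathbf{y}+\tilde Y^2/S=\Jfunc{\Phi}+\tilde Y^2/S,
\]
which is \eqref{eq:increment}.

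Nonnegativity and the equality case then follow at once from $S>0$: the increment is zero if and only if $\tilde Y=0$.

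No step is a real obstacle. The only point needing care is justifying $\mathbf{M}\succ0$, and hence $S>0$, from the hypotheses. This needs $\operatorname{Cov}_1$ to be positive semidefinite, which holds both as a genuine covariance and in mode~(M) of Assumption~\ref{ass:h1}, where the specified $\Cone$ must be a positive semidefinite extension. As a consistency check, the result recovers Theorem~\ref{thm:info}(a) for a one-column extension.
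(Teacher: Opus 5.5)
Your proof is correct and is essentially the paper's: both rest on block elimination around the Schur complement $S$, with $S>0$ obtained from $\matF(\Phi\cup\{\psi\})\succ0$ exactly as you argue. The paper writes down the solution $\vK_{\cup}$ of the enlarged normal system and evaluates $\vY_{\cup}\transp\vK_{\cup}$, invoking the symmetry of $\mathbf{A}$ only at the end to turn $\eta-\mathbf{y}\transp\mathbf{A}^{-1}\mathbf{b}$ into $\tilde Y$, whereas your factorization $\mathbf{L}\,\mathrm{diag}(\mathbf{A},S)\,\mathbf{L}\transp$ builds that symmetry in and gives the squared form $\tilde Y^{2}/S$ directly.
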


The proof is in Appendix~\ref{app:structure}. Equation~\eqref{eq:increment} sharpens the
monotonicity of Theorem~\ref{thm:info}(a) into an identity and separates what a column brings
from what it costs: $\tilde Y$ is the separation carried by $\psi$ that the columns present do
not carry, $S$ the dispersion of $\psi$ that they do not explain. A large $\eta$ is not enough on
its own, since it may be reproduced by $\mathbf{b}\transp\mathbf{A}^{-1}\mathbf{y}$, and neither
is an asymmetry of the two laws, which enters only through $\tilde Y$. It also prices a
dictionary fixed in advance: the score detector of Example~\ref{ex:score} is the case $s=1$ with
$\varphi_1=q$, and \eqref{eq:increment} says what a second column adds to it.

\begin{remark}[Choice of the order]\label{rem:order}
Along a nested sequence $\Phi_1\subset\Phi_2\subset\cdots$, \eqref{eq:increment} telescopes to
$\Jfunc{s}=\sum_{r\le s}\tilde Y_r^{2}/S_r$, where $\tilde Y_r$ and $S_r$ are the residual target
mass and the Schur complement of the $r$-th feature against the first $r-1$, and $\Jfunc{0}=0$ is
the functional of the empty dictionary. The terms are non-negative, and since $\fone\ll\fzero$
under Assumption~\ref{ass:moments} forces $\tri<2$, their partial sums are bounded by the finite
ceiling \eqref{eq:ceiling}; hence the terms tend
to zero, and the order is chosen by stopping at the first $s$ at which $\tilde Y_s^{2}/S_s$ falls
below $\eps_J\,\Jfunc{s-1}$. The relative tolerance $\eps_J$ is set by the analyst, from the cost
of one more feature and the accuracy with which the moments entering $\tilde Y_s$ and $S_s$ are
known; no value is recommended here, because what decides is the comparison of the increment with
the sampling error of the moments that produced it, a property of the calibration sample of
Section~\ref{sec:gsa} rather than of the dictionary.
\end{remark}

\paragraph{What this section does not say.} Both statements are about the population functional
$\Jfunc{s}$ and about nothing else. The dictionary with the larger $\Jfunc{s}$ is not thereby the
faster detector: the delay of the procedures of Section~\ref{sec:class} is governed by
$\thetaz\Eone\Lam$ (Theorem~\ref{thm:class}), and two dictionaries with equal $\Jfunc{s}$ need
not have equal slope roots $\thetaz$. Nor does \eqref{eq:increment} say which feature to propose
next: it prices a column already proposed, and the proposal is the analyst's. And symmetry and
moments are exact here: how much of the blindness survives a small departure from symmetry is not
addressed, nor is the estimated system, in which the moments are the calibration-sample ones.

\section{Sequential procedures driven by the increment}\label{sec:procedures}

Throughout, the coefficients are the population ones, $\vK=\matF^{-1}\vY$ of
\eqref{eq:normal-system} with $\matF=\Czero+\Cone$, and $\Lam$ is the increment
\eqref{eq:surrogate}. The object is not one detector but the class $\Tclass$ of
Section~\ref{sec:class}: \eqref{eq:cusum}, \eqref{eq:sr} and \eqref{eq:shiryaev} are three
stopping rules fed the same increment $Z_n=\thetaz\Lam(x_n)$, and the theorem below bounds all
three with the same two population quantities, $\thetaz$ and $\Eone\Lam$.

\begin{theorem}[The class $\Tclass$]\label{thm:class}
Let Assumptions~\ref{ass:moments} and~\ref{ass:bounded} hold, let $\vY\neq\mathbf0$ and
$\Dkl(\fone\|\fzero)<\infty$, and write
$\Lambda^{(s)}_{\max}=\sup_x\Lam(x)\in(0,\infty)$ and $\lambda_p=\log\{1/(1-p)\}$.
\begin{enumerate}
\item[(a)] \emph{(Tilt.)} The equation
\begin{equation}\label{eq:tilt}
\Ezero\exp\bigl(\theta\Lam\bigr)=1
\end{equation}
has the trivial root $\theta=0$ and exactly one root $\thetaz>0$; $\thetaz$ denotes that
positive root throughout. Consequently $\tilde\fzero=e^{\thetaz\Lam}\fzero$ is a probability
density and $e^{Z_n}$ is the likelihood ratio of $\tilde\fzero$ against $\fzero$ at $x_n$.
\item[(b)] \emph{(False alarms, at every threshold.)}
$\Einf T_h\ge e^{h}$ for every $h>0$;
$\Einf T_A\ge A$ for every $A>0$; and, for the Shiryaev rule with geometric prior parameter
$p\in(0,1)$, $\PFA(T_B)\le(1+B)^{-1}$ for every $B>0$. In the last statement, for the increment
$\Lam$ held fixed, $\PFA(T_B)$ is a functional of $\fzero$ and $p$ alone --- $\fone$ enters only
through the coefficients of $\Lam$ and the root $\thetaz$ --- while the bound on it depends on
$B$ alone.
\item[(c)] \emph{(Delay, at every threshold with a positive log-threshold.)} Let
$(\beta,c)=(h,0)$, $(\log A,0)$ and $(\log(B/p),\lambda_p)$ for the three rules, and let the
threshold satisfy $\beta>0$, that is $h>0$, $A>1$ and $B>p$. Then
\[
\ADD(T)\ \le\ \frac{\beta+\thetaz\Lambda^{(s)}_{\max}+c}{\thetaz\Eone\Lam+c},
\qquad \Eone\Lam=\tfrac12\Jfunc{s}>0 ,
\]
and for the Shiryaev rule the delay in Shiryaev's sense,
$\E\bigl[(T_B-\tau+1)^{+}\mid T_B\ge\tau\bigr]$ --- the expectation under the joint law of the
geometric prior and the observations, as in Definition~\ref{def:metrics} --- is bounded by the
same quantity.
\item[(d)] \emph{(One efficiency factor for the class.)} Put
\begin{equation}\label{eq:eff}
\eff=\frac{\thetaz\,\Eone\Lam}{\Dkl(\fone\|\fzero)} .
\end{equation}
Then $0<\eff\le1$, with equality if and only if $\thetaz\Lam=\ell$ $\fzero$-almost surely, the
additive constant being fixed by \eqref{eq:tilt}. Moreover, at thresholds meeting a
false-alarm constraint --- $h=\log\gamma$ and $A=\gamma$ for $\Einf T\ge\gamma$, and
$B=\alpha^{-1}-1$ for $\PFA(T_B)\le\alpha$ --- the delay of each rule exceeds the
corresponding asymptotic optimum, Lorden's \citeyearpar{lorden1971procedures} for the first two
and the Bayesian one \citep[Ch.~6]{tartakovsky2014sequential} for the third, by a factor of at
most $(1+o(1))/\eff$ as $\gamma\to\infty$, respectively $\alpha\to0$; at a finite threshold
the overshoot term of~(c) adds $O(1/\log\gamma)$ to the ratio.
\item[(e)] \emph{(Local limit.)} Let Assumption~\ref{ass:local} hold with $\fone=g_\theta$, and
assume in addition that $\vY(\theta)\neq\mathbf0$ for $\theta>0$ and that
$\Dkl(g_\theta\|\fzero)=\theta^2I/2+o(\theta^2)$. Then, as $\theta\downarrow0$,
$\thetaz\to2$ and
\begin{equation}\label{eq:kappa}
\eff\longrightarrow\kap:=\frac{\vB\transp\Czero^{-1}\vB}{I},
\qquad
\vB_i=\Ezero\bigl[(\vphi-\uzero)_i\,q\bigr],\quad i=1,\dots,s,
\end{equation}
where $0\le\kap\le1$, with $\kap=1$ if and only if $q$ lies in the span of
$(\vphi-\uzero)_1,\dots,(\vphi-\uzero)_s$ $\fzero$-almost surely.
\end{enumerate}
\end{theorem}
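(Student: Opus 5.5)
Part~(a) is proved by convexity. Put $M(\theta)=\Ezero e^{\theta\Lam}$. Under Assumption~\ref{ass:bounded} the increment is bounded, so $M$ is finite and smooth on $\real$, and it is strictly convex because $\Lam$ is not $\fzero$-a.s.\ constant (its means under the two laws differ by $\Jfunc{s}>0$). We have $M(0)=1$ and $M'(0)=\Ezero\Lam=-\Jfunc{s}/2<0$. For $\theta\to\infty$, $M(\theta)\to\infty$ provided $\Pzero(\Lam>0)>0$. This positivity is the step that needs an argument, since boundedness alone does not give it. We obtain it from $\Eone\Lam>0$ together with $\fone\ll\fzero$: the set $\{\Lam>0\}$ has positive $\fone$-mass, hence positive $\fzero$-mass. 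Strict convexity then yields exactly one positive root $\thetaz$. The change of measure $\tilde\fzero=e^{\thetaz\Lam}\fzero$ is immediate.

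For~(b) we use the likelihood-ratio structure from~(a). Under $\Pinf$, $e^{Z_n}$ has mean one. For the SR rule, $R_n-n$ is a $\Pinf$-martingale with zero mean. Optional stopping, justified by truncation at $n$ and monotone convergence (and trivial if $\Einf T_A=\infty$), gives $\Einf T_A=\Einf R_{T_A}\ge A$. For the CUSUM we use Lorden's reduction: $T_h$ is the minimum over $k$ of one-sided SPRTs started at time $k$. Each of these stops under $\Pinf$ with probability at most $e^{-h}$, by Wald's likelihood-ratio inequality for the tilted law, so $\Einf T_h\ge e^h$. Alternatively, since $e^{W_n}\le 1+R_n$-type comparisons hold, we may dominate by SR. For the Shiryaev rule, $S_n/(1+S_n)$ is the posterior probability that $\tau\le n$ computed as if $\tilde\fzero$ were the post-change law. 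Hence $\PFA(T_B)=\E[1-\pi_{T_B}]\le(1+B)^{-1}$. This argument does not use which law actually follows the change, because the event $\{T_B<\tau\}$ depends only on pre-change data; that is the functional-of-$\fzero$ remark.

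For~(c) the delay bound is a Wald argument on a random walk with positive drift. Under $\Pone$, $Z$ has mean $\thetaz\Eone\Lam>0$. For the CUSUM, the worst case over conditioning is the start from $W=0$. The rule stops no later than the first passage of the walk $\sum Z$ over $h$, and the overshoot is at most $\thetaz\Lambda^{(s)}_{\max}$. Wald's identity then gives the stated bound with $c=0$. For SR we have $\log R_n\ge\sum_{j\le n}Z_j$ from any start, and the same argument applies with $\beta=\log A$. For the Shiryaev rule, $\log S_n\ge\sum(Z_j+\lambda_p)+\log p$. This gives a walk with drift $\thetaz\Eone\Lam+\lambda_p$ and target $\log(B/p)$, and Wald's identity yields the bound with $c=\lambda_p$. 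The Shiryaev-sense delay follows by conditioning on $\tau$ and using the same pathwise domination.

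For~(d), write $\Dkl(\fone\|\fzero)-\thetaz\Eone\Lam=\Eone\log(\fone/\tilde\fzero)=\Dkl(\fone\|\tilde\fzero)\ge0$. Equality holds iff $\fone=\tilde\fzero$, that is iff $\thetaz\Lam=\ell$. The asymptotic statement follows by dividing the bound in~(c) at the prescribed thresholds by Lorden's $\log\gamma/\Dkl$ (or by $|\log\alpha|/(\Dkl+\lambda_p)$ in the Bayesian case); the remaining terms are $O(1/\log\gamma)$. For~(e), which I expect to be the main obstacle, we expand in $\theta$. Differentiability in quadratic mean gives $\vY(\theta)=\theta\vB+o(\theta)$ and $\matF\to2\Czero$. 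Hence $\Jfunc{s}=\tfrac{\theta^2}{2}\vB\transp\Czero^{-1}\vB+o(\theta^2)$, and $\Lam=O(\theta)$ uniformly since the dictionary is bounded. A second-order expansion of the tilt equation gives $-\tfrac12\Jfunc{s}\thetaz+\tfrac12\thetaz^2\Var_0\Lam+o(\theta^2)=0$. With $\Var_0\Lam\approx\Jfunc{s}/2$ this forces $\thetaz\to2$. Then $\eff\approx 2\cdot\tfrac12\Jfunc{s}/(\theta^2I/2)\to\kap$. The bound $\kap\le1$ and its equality case follow from Cauchy--Schwarz for the projection of $q$ onto the centred span in $L^2(\fzero)$. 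The delicate point is controlling the cubic remainder uniformly and justifying the DQM expansion of $\vY$ for bounded features.
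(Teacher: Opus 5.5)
Your proposal follows the paper's proof essentially step for step: strict convexity of $\theta\mapsto\Ezero e^{\theta\Lam}$, with $\fone\ll\fzero$ forcing $\Pzero(\Lam>0)>0$, in (a); Lorden's reduction with Ville's inequality, the $\Pinf$-martingale $\Rn-n$ under truncated optional stopping, and the posterior odds computed under the tilted post-change law $\tilde\fzero$ in (b); pathwise domination by a random walk restarted at $\tau$, with Wald's identity and an overshoot of at most one increment, in (c); the identity $\Dkl(\fone\|\fzero)-\thetaz\Eone\Lam=\Dkl(\fone\|\tilde\fzero)$ in (d); and in (e) the DQM expansion of the moments followed by a second-order expansion of the tilt equation. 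The delicate point you flag in (e) is settled in the paper in two ways. The cubic remainder is bounded relative to $\Var_0\Lam\ge\lambda_{\min}(\Czero)\|\vK\|^2$, uniformly for $t\in[0,3]$, rather than as an absolute $o(\theta^2)$; the absolute form would lose the admissible case $\vB=\mathbf0$, $\vY(\theta)\neq\mathbf0$, where $\Jfunc{s}$ is itself $o(\theta^2)$. The root then located in $(2-\eps,2+\eps)$ is identified with $\thetaz$ through the uniqueness established in (a).
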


The proof is in Appendix~\ref{app:procedures}. The extra expansion assumed in~(e) is not
implied by differentiability in quadratic mean, since a family may place mass $o(\theta^2)$
outside the support of $\fzero$; it holds for the location and scale families used below.

Parts (a) and (b) are elementary consequences of boundedness. Under
Assumption~\ref{ass:bounded} the exponential moment in \eqref{eq:tilt} is finite for every
argument and the overshoot of any of the three statistics is at most one increment, so no normal
or large-$\gamma$ approximation enters: clipping the dictionary is what turns three classical
arguments --- Ville's inequality, the martingale property of $\Rn-n$, and the posterior-odds
identity --- into bounds that hold at every threshold. When $\Lam=\ell$ one has $\thetaz=1$ and
the three bounds are the classical ones; at $s=1$ the exponential bound of (b) sharpens the
Chebyshev threshold brackets attached to the \KU{-}\PE{} criterion of
\citet{kunchenko1997moment}, which control the same tail by a second moment only.

Part (d) replaces any optimality claim, and it speaks about the class and not about a rule: the
constant governing the delay of every member at a given false-alarm level is
$\thetaz\Eone\Lam$, a fraction $\eff$ of the largest possible one, $\Dkl(\fone\|\fzero)$.
Membership of $\ell$ in the span of the dictionary is not sufficient for $\eff=1$, because the
increment is centred at the average law, $\Ezero\Lam+\Eone\Lam=0$ by the drift identities of
Section~\ref{sec:projection}, whereas $\ell$ is normalised by $\Ezero e^{\ell}=1$: for a
one-dimensional span containing $\ell$, $\thetaz\Lam=\ell$ holds if and only if
$\Dkl(\fone\|\fzero)=\Dkl(\fzero\|\fone)$. Unlike $\Jfunc{s}$ in Theorem~\ref{thm:info}, $\eff$
need not be monotone in $s$ at a fixed alternative; $\kap$ is, being a projection norm.

Part (e) identifies the regime the paper is aimed at. The numerator of \eqref{eq:kappa} is the
squared $L^2(\fzero)$ norm of the projection of the score onto the span of the centred features,
so $\kap$ is the fraction of the Fisher information the dictionary carries and the local form of
the information functional, $\Jfunc{s}=\tfrac12\theta^2I\kap+o(\theta^2)$ in the normalisation
$\matF=\Czero+\Cone$; in the terms of Proposition~\ref{prop:ku1}(ii) the two are
squared projection norms of $h^{*}$ and of $q$, agreeing to leading order because
$h^{*}=\tanh(\ell/2)$ and $\ell\approx\theta q$. The same captured-fraction functional governs
estimation, testing and classification built on one truncated likelihood expansion
\citep{godambe1960optimum,zabolotnii2026lsu}; here it is the local ceiling of a detection rule.

\begin{example}[Gaussian mean shift: the classical rules recovered]\label{ex:gaussian}
Let $\fzero=\mathcal N(0,1)$, $\fone=\mathcal N(\mu,1)$ and $\Phi=\{x\}$, and take
$\phimax=\infty$ (clipping changes the identities below by terms of order
$\Pzero(|x|>\phimax)$). With population moments, $\uzero=0$, $\uone=\mu$,
$\Czero=\Cone=1$, $\matF=2$ and $\vK=\mu/2$, so $\Lam$ is half the log-likelihood ratio:
$\Lam(x)=\tfrac12\mu(x-\mu/2)=\tfrac12\ell(x)$. Since
$\Ezero\exp(\theta\Lam)=\exp\{\theta(\theta-2)\mu^2/8\}$, the positive root of \eqref{eq:tilt}
is $\thetaz=2$ and the tilted increment is $\thetaz\Lam=\ell$ exactly;
$\thetaz\Eone\Lam=\mu^2/2=\Dkl(\fone\|\fzero)=\Dkl(\fzero\|\fone)$ and $\eff=1$, in
line with the criterion above, and the three members of $\Tclass$ are the classical rules.
\end{example}

\begin{example}[Scale change: $\ell$ in the span is not enough]\label{ex:scale}
Let $\fzero=\mathcal N(0,1)$, $\fone=\mathcal N(0,1.5^2)$ and $\Phi=\{x,x^2\}$ clipped at
$\phimax$. The log-likelihood ratio $\ell(x)=-\tfrac12(1.5^{-2}-1)x^2-\log1.5$ lies in the span
of $\{1,x,x^2\}$, and the clipped dictionary captures all but a fraction of a percent of the
local Fisher information, yet the efficiency is strictly below one:
$\Jfunc{2}=0.142$ in the normalisation $\matF=\Czero+\Cone$, $\kap=0.995$,
$\thetaz=2.740$ and $\eff=0.884$. The reason is the equality condition of (d): what is needed
is $\thetaz\Lam=\ell$, not $\ell\in\mathrm{span}\,\Phi$. Both laws are symmetric about the
origin, so $Y_1=0$ and the off-diagonal entry of $\matF$ vanishes, whence $K_1=0$ and $\Lam$ is
affine in $x^2$ alone; the one-dimensional criterion above then applies, and $\thetaz\Lam=\ell$
would require $\Dkl(\fone\|\fzero)=\Dkl(\fzero\|\fone)$, that is
$\log(1/1.5)+\tfrac12(1.5^2-1)=\log1.5+(1-1.5^2)/(2\cdot1.5^2)$, which is false. The class
$\Tclass$ therefore pays the factor $1/\eff$ at this alternative, a loss that no threshold and
no calibration sample can remove.
\end{example}

\begin{example}[Symmetric shape change: the dictionary is blind]\label{ex:shape}
Let $g_\theta$ be a shape change in the sense of Definition~\ref{def:shape},
$\Ezero[xq]=0$, and let $\Phi=\{x\}$. Then $\vB=\Ezero[xq]-\uzero\Ezero q=0$, so
$\kappa(\{x\})=0$ and, by (e), the efficiency of every member of $\Tclass$ driven by the linear
increment tends to zero: the delay exceeds the optimal one by a factor growing without bound as
the change becomes small, and since $\eff$ is a population quantity neither the threshold nor
the calibration-sample size changes this. For $g_\theta=\mathcal N(0,e^{2\theta})$,
$q=x^2-1$ and $I=2$; the linear increment vanishes identically ($\vY=\mathbf0$), whereas
$\Phi=\{x,x^2\}$ at $\phimax=\infty$ gives $\vB=(0,2)\transp$, $\Czero=\operatorname{diag}(1,2)$
and $\kap=1$. Clipping acts on the feature value, so $x^2$ is trimmed at $|x|=\phimax^{1/2}$ and
it lowers that figure by terms of order $\Pzero(|x|>\phimax^{1/2})$;
Example~\ref{ex:scale} quotes the clipped value for the same dictionary and the same local
family, and it is below one. When
$\vY=\mathbf0$ exactly the hypotheses of Theorem~\ref{thm:class} are not met: $\Lam\equiv0$,
\eqref{eq:tilt} has no positive root, and none of the three rules ever alarms; such a
configuration is written $\eff=0$ as a convention for ``no signal in the span'', not as a value
the theorem returns, $\thetaz$ is left undefined, and the limit $\thetaz\to2$ in (e) likewise
requires $\kap>0$. Theorem~\ref{thm:parity}(i) is what makes the example structural rather than
particular.
\end{example}

\begin{example}[Dictionaries of order one: the score, and SCUSUM]\label{ex:score}
At $s=1$ the matrix $\matF$ and the vector $\vY$ are scalars and \eqref{eq:surrogate} reads
\[
\thetaz\Lam(x)=\thetaz K_1\Bigl(\varphi_1(x)-\tfrac12\bigl[(\uzero)_1+(\uone)_1\bigr]\Bigr),
\qquad
K_1=\frac{Y_1}{(\Czero)_{11}+(\Cone)_{11}},
\]
an affine function of the single feature whose scale is fixed by \eqref{eq:normal-system} and
whose additive constant is fixed by \eqref{eq:tilt}. Two choices of $\varphi_1$ are worth
naming.

\emph{(i) The score of the local family.} Let Assumption~\ref{ass:local} hold and take
$\varphi_1=q$. If $\phimax\ge\sup_x|q(x)|$, so that the clipping of
Assumption~\ref{ass:bounded} is inactive on this dictionary, part (e) gives $\kap=1$, since $q$
lies in its own span; the efficiency of every member of $\Tclass$ so driven then tends to one as
$\theta\downarrow0$. At a fixed $\theta>0$, however, $\eff<1$ unless $\thetaz\Lam=\ell$, and
Example~\ref{ex:scale} is exactly that gap. If instead $\Pzero(|q|>\phimax)>0$ the clipped
feature spans a different subspace and $\kap$ falls below one by terms of that order, again as
in Example~\ref{ex:scale}.

\emph{(ii) The detector for unnormalised models.} The score-based \CUSUM{} of
\citet{wu2023scusum} is driven by $\lambda\{S_H(x;\fzero)-S_H(x;\fone)\}$, where
$S_H(x;f)=\tfrac12\|\nabla_x\log f(x)\|^2+\Delta_x\log f(x)$ is the Hyv\"arinen score
\citep{hyvarinen2005estimation} and $\lambda>0$ is a free scale. That score is a different
object from the score $q$ of Assumption~\ref{ass:local}: it is a functional of the observation
and of a fixed density, needs no normalising constant, and involves no local family. Taking the
difference $S_H(\cdot;\fzero)-S_H(\cdot;\fone)$ as the single feature $\varphi_1$ makes
\textsc{scusum} the member of $\Tclass$ of order one carrying that feature, and
Theorem~\ref{thm:class} applies to it: the free scale $\lambda$, which that construction leaves
to the user, is supplied here as $\thetaz K_1$, the value that makes $e^{Z_n}$ a likelihood ratio
by part (a), and the centring $\tfrac12(\uzero+\uone)$, which that construction does not
prescribe, is supplied by \eqref{eq:normal-system}. No claim of $\kap=1$ attaches to this
dictionary: what fraction it captures depends on how much of the score of the intended
alternative lies in the span of the Hyv\"arinen-score difference, and that is a property of the
pair, not of the construction.

For either choice, Proposition~\ref{prop:bordered} prices the move to $s>1$: adding a column
raises $\Jfunc{s}$ by exactly the bordered increment $\tilde Y^2/S$, computable from the
moments. That is a gain in the information functional, not in $\eff$, which as noted above need
not be monotone in $s$ at a fixed alternative.
\end{example}

\begin{remark}[Why not a first-order approximation of the divergence]\label{rem:ae}
The delay of every member of $\Tclass$ is governed by $\thetaz\Eone\Lam$ and not by the
captured fraction of the divergence. The two are different functionals: Example~\ref{ex:scale}
has $\kap=0.995$ and $\eff=0.884$, so a dictionary that captures essentially all of the local
Fisher information still loses more than a tenth of the delay constant at a finite alternative,
because $\thetaz\neq1$ there. Choosing a dictionary by a first-order expansion of
$\Dkl(\fone\|\fzero)$ therefore answers a different question from the one the delay asks.
\end{remark}

\begin{remark}[What is not claimed]\label{rem:minimax}
The \CUSUM{} driven by the true log-likelihood ratio is minimax optimal in Lorden's sense
\citep{lorden1971procedures,moustakides1986optimal}. That property belongs to $\ell$, not to
$\Lam$: it is a statement about the statistic the rule is fed, and replacing $\ell$ by
$\thetaz\Lam$ forfeits it unless the two coincide. Theorem~\ref{thm:class} supplies instead
bounds at every threshold and the explicit factor $1/\eff$ against the optimum; no optimality is
asserted for any member of $\Tclass$, nor for the choice of dictionary.
\end{remark}


\section{Estimated coefficients}\label{sec:plugin}

The procedure of Section~\ref{sec:gsa} replaces population moments by estimates: $\huzero$
and $\hCzero$ are the sample mean and the sample covariance of the clipped features on a
calibration sample $x_1,\dots,x_{\Ncal}$ from $\fzero$, and $\hmatF$, $\hvY$,
$\hvK$, $\hat k_0$, $\hat\Lambda$, $\hat J=\hvK\transp\hvY$ and $\hat\thetaz$ are
the sample versions of $\matF$, $\vY$, $\vK$, $k_0$, $\Lam$, $\Jfunc{s}$ and $\thetaz$, the last
being the positive root of $\Ncal^{-1}\sum_{j\le\Ncal}\exp(\theta\hat\Lambda(x_j))=1$. Mode~(M)
of Assumption~\ref{ass:h1} supplies $(\uone,\Cone)$ as a fixed target, so
$(\huone,\hCone)=(\uone,\Cone)$ carries no sampling error; mode~(E) takes
$(\huone,\hCone)$ from an independent labelled sample of size $N_1$ with
$N_1/\Ncal\to\lambda\in(0,\infty]$. For a pair $(\vK,k_0)$ write $e(\vK,k_0)$ for the efficiency
\eqref{eq:eff} of the procedures of Section~\ref{sec:class} driven by $k_0+\vK\transp\vphi$, and
$R(\vK)=(\vK\transp\vY)^2/(\vK\transp\matF\vK)$ for the population Rayleigh quotient, whose
maximum $\Jfunc{s}$ is attained at $\vK=\matF^{-1}\vY$ (Proposition~\ref{prop:ku1}). Finally,
with $\tilde{\vphi}=\vphi-\uzero$ under $\fzero$ and $\tilde{\vphi}_1=\vphi-\uone$ under $\fone$,
collect the mean-zero summands of the two samples in
$\mathbf Z_0=(\tilde{\vphi},\operatorname{vech}(\tilde{\vphi}\tilde{\vphi}\transp-\Czero),
e^{\thetaz\Lam}-1)$ and
$\mathbf Z_1=(\tilde{\vphi}_1,\operatorname{vech}(\tilde{\vphi}_1\tilde{\vphi}_1\transp-\Cone))$,
and put $\mathbf V_0=\operatorname{Cov}_0\mathbf Z_0$, $\mathbf V_1=\operatorname{Cov}_1\mathbf
Z_1$ and $\mathbf V=\mathbf V_0$ in mode~(M),
$\mathbf V=\operatorname{diag}(\mathbf V_0,\lambda^{-1}\mathbf V_1)$ in mode~(E); all three are
finite under Assumption~\ref{ass:bounded}. Write $\boldsymbol\eta=(\uzero,\Czero)$ in mode~(M)
and $\boldsymbol\eta=(\uzero,\Czero,\uone,\Cone)$ in mode~(E) for the moment vector estimated by
the two samples. The remaining coordinate of $\mathbf Z_0$, the sampling term
$e^{\thetaz\Lam}-1$ of the empirical tilt equation, is not a moment and enters part~(d) alone;
so that every derivative below can be paired with $\mathbf V$ as it stands, a Jacobian or
gradient taken with respect to $\boldsymbol\eta$ --- $\mathbf G$, $\mathbf G_{(\vK,k_0)}$ and
$\nabla J$ throughout --- is read as extended by a zero column, respectively a zero entry, on
that coordinate.

\begin{theorem}[Plug-in coefficients]\label{thm:plugin}
Let Assumptions~\ref{ass:moments} and~\ref{ass:bounded} hold, let $\vY\neq\mathbf 0$, and in
mode~(E) let the two samples be independent with $N_1/\Ncal\to\lambda\in(0,\infty]$, as fixed
above; a labelled sample with $N_1=o(\Ncal)$ sets the slower rate $N_1^{-1/2}$ and is not
covered.
\begin{enumerate}
\item[(a)] $\hvK\to\vK$ almost surely and
$\sqrt{\Ncal}(\hvK-\vK)\Rightarrow\mathcal N(0,\boldsymbol\Sigma)$ with
$\boldsymbol\Sigma=\mathbf G\mathbf V\mathbf G\transp$, where $\mathbf G$ is the Jacobian of the
moment map $\boldsymbol\eta\mapsto\matF^{-1}\vY$, given by
\begin{equation}\label{eq:plugin-jac}
\mathrm d\vK=\matF^{-1}\bigl(\mathrm d\vY-\mathrm d\matF\,\vK\bigr),
\qquad
\begin{aligned}
\text{(M)}\quad &\mathrm d\vY=-\mathrm d\uzero, &&\mathrm d\matF=\mathrm d\Czero,\\
\text{(E)}\quad &\mathrm d\vY=\mathrm d\uone-\mathrm d\uzero, &&
\mathrm d\matF=\mathrm d\Czero+\mathrm d\Cone,
\end{aligned}
\end{equation}
Differentiating the centring gives
$\mathrm dk_0=-\tfrac12(\mathrm d\vK)\transp(\uzero+\uone)
-\tfrac12\vK\transp(\mathrm d\uzero+\mathrm d\uone)$, so the pair $(\hvK,\hat k_0)$ is
jointly asymptotically normal with covariance
$\boldsymbol\Sigma_{\vK,k_0}=\mathbf G_{(\vK,k_0)}\mathbf V\mathbf G_{(\vK,k_0)}\transp$,
$\mathbf G_{(\vK,k_0)}$ being the Jacobian of $\boldsymbol\eta\mapsto(\vK,k_0)$. In mode~(M)
both $\mathrm d\uone$ and $\mathrm d\Cone$ vanish and $\boldsymbol\Sigma$ is a functional of the
calibration sample alone; in mode~(E)
$\boldsymbol\Sigma=\boldsymbol\Sigma_0+\lambda^{-1}\boldsymbol\Sigma_1$, the two terms coming from
the two independent samples.
\item[(b)] $R(\hvK)=\Jfunc{s}-(\hvK-\vK)\transp\bigl(\matF-\vY\vY\transp/\Jfunc{s}\bigr)
(\hvK-\vK)\,(1+o_p(1))=\Jfunc{s}-O_p(\Ncal^{-1})$; under Assumption~\ref{ass:local} with
$\vB\neq\mathbf 0$ the same second-order expansion holds for
the quotient $(\mathbf k\transp\vB)^2/(I\,\mathbf k\transp\Czero\mathbf k)$, whose maximum over
$\mathbf k\neq\mathbf 0$ is the captured fraction $\kap$ of \eqref{eq:kappa}, about its maximiser
$\Czero^{-1}\vB$. The efficiency, when $\Dkl(\fone\|\fzero)<\infty$, satisfies
$e(\hvK,\hat k_0)-e(\vK,k_0)=\nabla e\transp(\hvK-\vK,\,\hat k_0-k_0)\transp
+O_p(\Ncal^{-1})$ with
\begin{equation}\label{eq:plugin-eff}
\nabla e=\frac{\thetaz}{\Dkl(\fone\|\fzero)}
\left[\begin{pmatrix}\uone\\1\end{pmatrix}
-\frac{\Eone\Lam}{\tilde{\E}_0\Lam}\begin{pmatrix}\tilde{\E}_0\vphi\\1\end{pmatrix}\right],
\qquad \tilde{\E}_0h=\Ezero[h\,e^{\thetaz\Lam}]
\end{equation}
the expectation under the exponentially tilted law. The gradient vanishes if and only if the
tilted law reproduces $\uone$ and $\Eone\Lam$; this holds when $\thetaz\Lam=\ell$, the case of
equality in \eqref{eq:eff}, but not in general, so the plug-in efficiency loses
$O_p(\Ncal^{-1/2})$, with asymptotic variance $\nabla e\transp\boldsymbol\Sigma_{\vK,k_0}\nabla e$
computed as in~(a).
\item[(c)] The relative standard error of the $s$-th diagonal entry of $\hCzero$ --- the
sample variance of $\varphi_s$, the feature of highest exponent in the dictionaries of
Section~\ref{sec:dictionaries} --- is $\sqrt{(\gamma_4(\varphi_s)+2)/\Ncal}+O(\Ncal^{-1})$,
$\gamma_4(\varphi_s)$ being the excess kurtosis of the clipped feature under $\fzero$; it is
finite for every law by Assumption~\ref{ass:bounded}. The leading term vanishes only when
$\gamma_4(\varphi_s)=-2$, that is, when $\varphi_s$ takes two values with equal probability,
and the error is then of order $\Ncal^{-1}$.
\item[(d)] $\sqrt{\Ncal}(\hat J-\Jfunc{s})\Rightarrow\mathcal N(0,\nabla J\transp\mathbf V\nabla
J)$ and $\sqrt{\Ncal}(\hat\thetaz-\thetaz)\Rightarrow\mathcal N(0,\nabla\thetaz\transp\mathbf
V\nabla\thetaz)$, where $\nabla J$ is the gradient of $\boldsymbol\eta\mapsto\Jfunc{s}$ obtained
from $\mathrm dJ=2\vK\transp\mathrm d\vY-\vK\transp\mathrm d\matF\,\vK$ --- that is, $2\vK$ with
respect to $\vY$ and $-\vK\vK\transp$ with respect to $\matF$ --- composed with
\eqref{eq:plugin-jac}, and, by implicit differentiation of $\Ezero e^{\theta\Lam}=1$ in
$(\theta,\vK,k_0)$,
\begin{equation}\label{eq:plugin-an}
\frac{\partial\thetaz}{\partial(\vK,k_0)}
=-\frac{\thetaz}{\tilde{\E}_0[\Lam]}\begin{pmatrix}\tilde{\E}_0\vphi\\ 1\end{pmatrix}\transp,
\qquad
\nabla\thetaz=\Bigl(\frac{\partial\thetaz}{\partial(\vK,k_0)}\,\mathbf G_{(\vK,k_0)}\Bigr)\transp
-\frac{1}{\tilde{\E}_0[\Lam]}\,\mathbf e,
\end{equation}
in which $\partial\thetaz/\partial(\vK,k_0)$ is a row of length $s+1$, $\mathbf G_{(\vK,k_0)}$ is
the Jacobian of~(a), and $\mathbf e$ is the unit vector of the tilt coordinate of $\mathbf Z_0$;
that second term is what makes $\nabla\thetaz$ more than the chain rule through $(\vK,k_0)$.
Writing
$\mathbf g_i=\mathbf G_{(\vK,k_0)}^{(i)\transp}(\tilde{\E}_0\vphi,1)\transp$ for the blocks of
$\mathbf G_{(\vK,k_0)}$ acting on the calibration sample ($i=0$) and on the labelled sample
($i=1$), and $\mathbf V_0^{-}$ for the covariance of $\mathbf Z_0$ with its last coordinate
removed, the limiting variance of $\hat\thetaz$ is
\begin{equation*}
\frac{1}{(\tilde{\E}_0\Lam)^2}
\Bigl[\Var_0 e^{\thetaz\Lam}
+2\thetaz\,\mathbf g_0\transp\operatorname{Cov}_0\bigl(\mathbf Z_0^{-},e^{\thetaz\Lam}\bigr)
+\thetaz^{2}\,\mathbf g_0\transp\mathbf V_0^{-}\mathbf g_0\Bigr]
+\frac{\thetaz^{2}}{\lambda\,(\tilde{\E}_0\Lam)^2}\,
\mathbf g_1\transp\mathbf V_1\mathbf g_1,
\end{equation*}
the last term absent in mode~(M).
\end{enumerate}
\end{theorem}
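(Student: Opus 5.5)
The whole theorem is a delta-method statement built on one multivariate central limit theorem, and the plan is to set that up first and then read off (a)--(d) as smooth images.

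\emph{Joint limit of the sample moments.} Under Assumption~\ref{ass:bounded} every coordinate of $\mathbf Z_0$ and $\mathbf Z_1$ is bounded, so the strong law and the multivariate CLT apply to the centred sample averages. Replacing $\vphi-\uzero$ by $\vphi-\huzero$ in the sample covariance changes it only by the outer product of $\huzero-\uzero$, which is $O_p(\Ncal^{-1})$. Hence $\sqrt{\Ncal}(\hat{\boldsymbol\eta}-\boldsymbol\eta)$ is asymptotically normal with covariance $\mathbf V$ restricted to the moment coordinates. The empirical tilt coordinate $\Ncal^{-1}\sum_j(e^{\thetaz\Lam(x_j)}-1)$ is carried jointly with it. In mode~(E), independence of the two samples and $N_1/\Ncal\to\lambda$ give the block-diagonal $\operatorname{diag}(\mathbf V_0,\lambda^{-1}\mathbf V_1)$, with the second block vanishing when $\lambda=\infty$.

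\emph{Part (a).} The map $\boldsymbol\eta\mapsto\matF^{-1}\vY$ is smooth near the truth because $\matF\succ0$ (Assumption~\ref{ass:moments}) and inversion is analytic on positive definite matrices. Continuous mapping gives almost sure convergence, and differentiating $\matF\vK=\vY$ gives \eqref{eq:plugin-jac}. The same argument, applied to $k_0=-\tfrac12\vK\transp(\uzero+\uone)$, gives the joint law of $(\hvK,\hat k_0)$.

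\emph{Part (c).} This is the classical asymptotic variance of a sample variance, $\mu_4-\sigma^4$. Dividing by $\sigma^4$ gives $\gamma_4+2$. The equality case uses $\gamma_4\ge-2$, with equality exactly for a symmetric two-point law.

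\emph{Part (b).} $R$ is scale invariant and maximised at $\vK$, so its gradient vanishes there. A second-order Taylor expansion of the quotient gives the stated Hessian form $-(\matF-\vY\vY\transp/\Jfunc{s})$, which is $O_p(\Ncal^{-1})$ by (a). The same computation with $(\vB,I\Czero)$ in place of $(\vY,\matF)$ gives the captured-fraction analogue.

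For the efficiency I would write $e(\mathbf k,k_0)=\theta_0(\mathbf k,k_0)\,\Eone[k_0+\mathbf k\transp\vphi]/\Dkl$. The derivative of $\Eone[\cdot]$ is $(\uone,1)$. The derivative of $\theta_0$ comes from the implicit function theorem applied to $G(\theta,\mathbf k,k_0)=\Ezero e^{\theta(k_0+\mathbf k\transp\vphi)}-1$. Here $\partial_\theta G=\tilde{\E}_0\Lam$, which is strictly positive at the positive root because $\theta\mapsto\Ezero e^{\theta\Lam}$ is strictly convex and crosses $1$ upward there. Also $\partial_{(\mathbf k,k_0)}G=\thetaz(\tilde{\E}_0\vphi,1)$. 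Combining these gives \eqref{eq:plugin-an} and \eqref{eq:plugin-eff}.

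The gradient vanishes exactly when $(\uone,1)\propto(\tilde{\E}_0\vphi,1)$ with the matching factor. If $\thetaz\Lam=\ell$, then the tilted law is $\fone$, so $\tilde{\E}_0\vphi=\uone$ and $\tilde{\E}_0\Lam=\Eone\Lam$, and the gradient is zero. Otherwise the linear term survives, giving the rate $O_p(\Ncal^{-1/2})$ with variance $\nabla e\transp\boldsymbol\Sigma_{\vK,k_0}\nabla e$.

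\emph{Part (d).} For $\hat J=\hvY\transp\hmatF^{-1}\hvY$, the differential is $\mathrm dJ=2\vK\transp\mathrm d\vY-\vK\transp\mathrm d\matF\,\vK$. The tilt root is the main obstacle, because $\hat\thetaz$ solves an empirical equation whose function $\hat\Lambda$ is itself estimated. I would handle it by the standard Z-estimator route with nuisance parameters, in four steps.
\begin{enumerate}
\item[(1)] Establish consistency. By boundedness, $\Ncal^{-1}\sum_j e^{\theta\hat\Lambda(x_j)}$ converges to $\Ezero e^{\theta\Lam}$ uniformly on compact $\theta$-sets. Strict convexity then isolates the positive root, bounded away from $0$ since $\Ezero\Lam<0$.
\item[(2)] Linearise the empirical equation around $(\thetaz,\vK,k_0)$. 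The stochastic equicontinuity needed is immediate because the class $\{e^{\theta(k_0+\mathbf k\transp\vphi)}\}$ is Lipschitz in its parameters, uniformly in $x$.
\item[(3)] Collect the resulting terms: $\tilde{\E}_0\Lam\,(\hat\thetaz-\thetaz)$, plus $\thetaz(\tilde{\E}_0\vphi,1)\transp(\hvK-\vK,\hat k_0-k_0)$, plus the pure sampling term $\Ncal^{-1}\sum_j(e^{\thetaz\Lam(x_j)}-1)$, equal to $o_p(\Ncal^{-1/2})$.
\item[(4)] Solve for $\hat\thetaz-\thetaz$. This produces the chain-rule term through $\mathbf G_{(\vK,k_0)}$ together with the extra coordinate $-\mathbf e/\tilde{\E}_0\Lam$.
\end{enumerate}

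Expanding the quadratic form $\nabla\thetaz\transp\mathbf V\nabla\thetaz$ along the calibration and labelled blocks then gives the displayed variance. The cross term appears because the tilt summand and the moment summands come from the same calibration sample.
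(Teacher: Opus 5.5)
Your proposal is correct and follows the paper's own route. The steps are: a joint central limit theorem for the stacked calibration and labelled means; the delta method through the differentiated normal system for (a) and for $\hat J$; the implicit-function theorem on $\Ezero e^{\theta\Lambda}-1$ for $\partial\thetaz/\partial(\vK,k_0)$ and $\nabla e$; the variance of a sample variance for (c); and a linearisation of the empirical tilt equation about $(\thetaz,\boldsymbol\eta)$ for $\hat\thetaz$.

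One wording point in step (3) of (d). What equals $o_p(\Ncal^{-1/2})$ is the sum of the three collected terms, not the sampling term $\Ncal^{-1}\sum_j(e^{\thetaz\Lam(x_j)}-1)$. That term is of exact order $\Ncal^{-1/2}$, and it is the source of the $-\mathbf e/\tilde{\E}_0\Lam$ coordinate and of the cross term in the displayed variance, as your step (4) correctly uses.
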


The proof is in Appendix~\ref{app:plugin}. Part~(b) separates two statements. The direction of
the plug-in coefficients is optimal to first order --- the captured information loses only
$O_p(\Ncal^{-1})$, because $\vK$ maximises the Rayleigh quotient, and the same expansion governs
every quotient of that shape, the one behind $\kap$ included --- while the efficiency of the
resulting procedure is not stationary at $\vK$: the
normal system optimises a quadratic proxy, not $\eff$, so a first-order term survives unless
$\Lam$ is exact. In the local regime of Theorem~\ref{thm:class}(e) that term disappears, since
$e(\vK,k_0)\to\kap$, which is stationary at the population direction; it is an effect of a fixed,
non-local $\thetaz$.

Part~(d) is not part~(a) applied twice. The estimator $\hat\thetaz$ reads the calibration sample
once through $(\hvK,\hat k_0)$ and once through the empirical expectation in the tilt
equation, and the two readings are strongly dependent: for the $t_5$ location pair of
Section~\ref{sec:numerics} with the quadratic polynomial dictionary the two contributions to
\eqref{eq:plugin-an} are of the same size with correlation $-0.94$, so the limiting variance of
$\hat\thetaz$ is about a seventh of either of them, and either reading alone would misstate it by
a factor of that order. What a finite calibration sample can report of that variance is a
separate matter, and the cancellation is what makes it hard: on the same pair at
$\Ncal=N_1=2000$, the standard error of $\hat J$ estimated inside a single pair of samples reads
$0.01261$ against the asymptotic $0.01262$, whereas for $\hat\thetaz$ the same estimate reads
$0.0724$ against $0.0852$, an estimate of a small difference of two nearly equal terms. That is a
measurement at one size, not a size to use.

\begin{remark}[Calibration size]\label{rem:ncal}
Part~(c) turns the choice of $\Ncal$ into a condition on one number: for a target relative
standard error $r$ of the leading variance entry, $\Ncal\ge(\gamma_4(\varphi_s)+2)/r^2$. The
target is the analyst's; Table~\ref{tab:ncal} evaluates the bound at two points of that axis,
and recommends neither. Clipping acts on the feature value, so $\varphi_3$ is trimmed at
$|x|=\phimax^{1/3}$, which is why the $s=3$ row is not larger than the $s=2$ row; without it the
$s\ge2$ entries need $\Ezero x^{4s}$ and are infinite for $t_5$, leaving the condition empty.
\end{remark}
\begin{table}[t]
\centering
\caption{Theorem~\ref{thm:plugin}(c) for the polynomial dictionary clipped at $\phimax=10$: excess kurtosis $\gamma_4(\varphi_s)$ of the clipped feature under $\fzero$, and the smallest $\Ncal$ at which the relative standard error of $(\hat{\mathbf{C}}_0)_{ss}$ is at most $r$. The two $r$ are points on the axis of the bound, not recommendations.}
\label{tab:ncal}
\begin{tabular}{llrrr}
\toprule
Law & $s$ & $\gamma_4(\varphi_s)$ & $\Ncal$ ($r=0.10$) & $\Ncal$ ($r=0.05$) \\
\midrule
Student $t_5$ & 1 & 3.45 & 546 & 2181 \\
Student $t_5$ & 2 & 5.71 & 771 & 3084 \\
Student $t_5$ & 3 & 3.00 & 501 & 2002 \\
Laplace & 2 & 3.55 & 555 & 2220 \\
\bottomrule
\end{tabular}
\end{table}


\section{Numerical illustration}\label{sec:numerics}

\begin{table}[p]
\centering
\scriptsize
\setlength{\tabcolsep}{4pt}
\caption{Efficiency of the surrogate driving a \CUSUM{} or a Shiryaev--Roberts (\SR{}) statistic. Population quantities by quadrature with the clipped dictionaries ($\phimax = 10$) and $\matF = \Czero+\Cone$: the information functional $\Jfunc{s}$, the tilt root $\thetaz$ of \eqref{eq:tilt}, the efficiency $\eff$ of \eqref{eq:eff} and the captured Fisher-information fraction $\kap$ of \eqref{eq:kappa} for the local family through $\fzero$. The last two columns are the measured ratio of detection delays of the oracle arm (true log-likelihood ratio) to the surrogate arm, for the \CUSUM{} ($g_t = \max(0, g_{t-1}+\text{increment})$) and for the \SR{} statistic ($R_t = (1+R_{t-1})\exp(\text{increment})$, $R_0 = 0$), both at $\ARL = 1000$ on the same streams: thresholds $h$ (\CUSUM{}) and $\log A$ (\SR{}) from the same 2000 in-control runs of length 12000, delays from the same 2000 runs started at the change with $g_0 = 0$ resp.\ $R_0 = 0$, common random numbers across the oracle and surrogate arms within each statistic, seed 42. All three dictionaries have $\varphi_1(x)=x$, so their $s=1$ rows coincide and are printed once (\texttt{all}). The local family behind $\kap$ is the location family for the first three pairs, the scale family for the lognormal and the Gaussian variance pair, and the Pearson~III skewness family for the last. A dash marks a configuration with $\vY = \mathbf{0}$ (no signal in the span, so $\Lam \equiv 0$) or, in either of the last two columns, a predicted delay beyond the simulation budget. The identity between either ratio column and $\eff$ holds only as $\ARL \to \infty$, so at a finite target neither column is a direct check of $\eff$.}
\label{tab:efficiency}
\begin{tabular}{llcrrrrrr}
\toprule
$(\fzero,\fone)$ & $\Phi$ & $s$ & $\Jfunc{s}$ & $\thetaz$ & $\eff$ & $\kap$ & $\ADD_{\ell}/\ADD_{\Lam}$ \CUSUM{} & \SR{} \\
\midrule
$\mathcal{N}(0,1)\to\mathcal{N}(0.5,1)$ & \texttt{all} & 1 & 0.1250 & 2.000 & 1.000 & 1.000 & $1.000 \pm 0.000$ & $1.000 \pm 0.000$ \\
\addlinespace[2pt]
 & \texttt{poly} & 2 & 0.1250 & 2.000 & 1.000 & 1.000 & $1.000 \pm 0.001$ & $1.001 \pm 0.000$ \\
 &  & 3 & 0.1251 & 1.996 & 0.999 & 1.000 & $1.002 \pm 0.003$ & $1.001 \pm 0.001$ \\
\addlinespace[2pt]
 & \texttt{frac} & 2 & 0.1251 & 1.996 & 0.999 & 1.000 & $1.001 \pm 0.003$ & $1.003 \pm 0.002$ \\
 &  & 3 & 0.1252 & 1.994 & 0.998 & 1.000 & $1.002 \pm 0.003$ & $1.002 \pm 0.002$ \\
\addlinespace[2pt]
 & \texttt{log} & 2 & 0.1250 & 2.000 & 1.000 & 1.000 & $1.001 \pm 0.001$ & $1.003 \pm 0.001$ \\
 &  & 3 & 0.1252 & 1.994 & 0.998 & 1.000 & $1.005 \pm 0.003$ & $1.006 \pm 0.002$ \\
\midrule
$t_5\to t_5+0.5$ & \texttt{all} & 1 & 0.0755 & 1.916 & 0.780 & 0.805 & $0.852 \pm 0.008$ & $0.839 \pm 0.006$ \\
\addlinespace[2pt]
 & \texttt{poly} & 2 & 0.0757 & 1.892 & 0.773 & 0.805 & $0.843 \pm 0.008$ & $0.834 \pm 0.006$ \\
 &  & 3 & 0.0771 & 1.829 & 0.761 & 0.825 & $0.837 \pm 0.008$ & $0.825 \pm 0.006$ \\
\addlinespace[2pt]
 & \texttt{frac} & 2 & 0.0900 & 1.954 & 0.949 & 0.968 & $0.963 \pm 0.005$ & $0.942 \pm 0.005$ \\
 &  & 3 & 0.0927 & 1.942 & 0.971 & 0.995 & $0.989 \pm 0.004$ & $0.974 \pm 0.004$ \\
\addlinespace[2pt]
 & \texttt{log} & 2 & 0.0764 & 1.898 & 0.783 & 0.805 & $0.848 \pm 0.008$ & $0.843 \pm 0.006$ \\
 &  & 3 & 0.0924 & 1.945 & 0.970 & 0.983 & $0.986 \pm 0.004$ & $0.973 \pm 0.004$ \\
\midrule
$\mathrm{Lap}(0,1)\to\mathrm{Lap}(0.5,1)$ & \texttt{all} & 1 & 0.0625 & 1.942 & 0.570 & 0.500 & $0.674 \pm 0.009$ & $0.658 \pm 0.007$ \\
\addlinespace[2pt]
 & \texttt{poly} & 2 & 0.0627 & 1.911 & 0.563 & 0.500 & $0.660 \pm 0.009$ & $0.654 \pm 0.007$ \\
 &  & 3 & 0.0704 & 1.797 & 0.594 & 0.583 & $0.686 \pm 0.009$ & $0.689 \pm 0.007$ \\
\addlinespace[2pt]
 & \texttt{frac} & 2 & 0.0957 & 1.954 & 0.878 & 0.922 & $0.920 \pm 0.007$ & $0.916 \pm 0.006$ \\
 &  & 3 & 0.0959 & 1.960 & 0.882 & 0.994 & $0.924 \pm 0.007$ & $0.921 \pm 0.006$ \\
\addlinespace[2pt]
 & \texttt{log} & 2 & 0.0706 & 1.862 & 0.617 & 0.500 & $0.689 \pm 0.009$ & $0.692 \pm 0.007$ \\
 &  & 3 & 0.1033 & 1.927 & 0.935 & 0.831 & $0.956 \pm 0.006$ & $0.960 \pm 0.005$ \\
\midrule
$\mathrm{LN}(0,1)\to 1.3\cdot\mathrm{LN}(0,1)$ & \texttt{all} & 1 & 0.0252 & 2.016 & 0.738 & 0.718 & $0.812 \pm 0.010$ & $0.829 \pm 0.008$ \\
\addlinespace[2pt]
 & \texttt{poly} & 2 & 0.0275 & 2.040 & 0.816 & 0.771 & $0.884 \pm 0.009$ & $0.889 \pm 0.007$ \\
 &  & 3 & 0.0287 & 2.037 & 0.850 & 0.806 & $0.902 \pm 0.009$ & $0.908 \pm 0.006$ \\
\addlinespace[2pt]
 & \texttt{frac} & 2 & 0.0313 & 1.875 & 0.853 & 0.933 & $0.897 \pm 0.008$ & $0.932 \pm 0.006$ \\
 &  & 3 & 0.0344 & 1.998 & 0.997 & 0.997 & $0.998 \pm 0.003$ & $0.997 \pm 0.002$ \\
\addlinespace[2pt]
 & \texttt{log} & 2 & 0.0344 & 2.000 & 1.000 & 1.000 & $1.000 \pm 0.000$ & $1.000 \pm 0.000$ \\
 &  & 3 & 0.0344 & 2.000 & 1.000 & 1.000 & $1.000 \pm 0.001$ & $1.001 \pm 0.000$ \\
\midrule
$\mathcal{N}(0,1)\to\mathcal{N}(0,1.5^2)$ & \texttt{all} & 1 & 0.0000 & -- & 0.000 & 0.000 & -- & -- \\
\addlinespace[2pt]
 & \texttt{poly} & 2 & 0.1416 & 2.740 & 0.884 & 0.995 & $0.947 \pm 0.007$ & $0.994 \pm 0.006$ \\
 &  & 3 & 0.1416 & 2.740 & 0.884 & 0.995 & $0.947 \pm 0.007$ & $0.994 \pm 0.006$ \\
\addlinespace[2pt]
 & \texttt{frac} & 2 & 0.0000 & -- & 0.000 & 0.000 & -- & -- \\
 &  & 3 & 0.0000 & -- & 0.000 & 0.000 & -- & -- \\
\addlinespace[2pt]
 & \texttt{log} & 2 & 0.0666 & 2.430 & 0.369 & 0.405 & $0.503 \pm 0.008$ & $0.475 \pm 0.007$ \\
 &  & 3 & 0.0666 & 2.430 & 0.369 & 0.405 & $0.503 \pm 0.008$ & $0.475 \pm 0.007$ \\
\midrule
P-III $\gamma_3\!:0\to1.5$ & \texttt{all} & 1 & 0.0000 & -- & 0.000 & 0.000 & -- & -- \\
\addlinespace[2pt]
 & \texttt{poly} & 2 & 0.0015 & 1.841 & 0.007 & 0.000 & $0.052 \pm 0.001$ & $0.046 \pm 0.001$ \\
 &  & 3 & 0.1001 & 1.747 & 0.407 & 0.554 & $0.550 \pm 0.007$ & $0.524 \pm 0.006$ \\
\addlinespace[2pt]
 & \texttt{frac} & 2 & 0.0818 & 2.301 & 0.438 & 0.529 & $0.608 \pm 0.007$ & $0.580 \pm 0.006$ \\
 &  & 3 & 0.0901 & 2.540 & 0.532 & 0.764 & $0.690 \pm 0.007$ & $0.668 \pm 0.007$ \\
\addlinespace[2pt]
 & \texttt{log} & 2 & 0.0005 & 1.856 & 0.002 & 0.000 & $0.038 \pm 0.001$ & $0.033 \pm 0.000$ \\
 &  & 3 & 0.0935 & 2.447 & 0.532 & 0.713 & $0.685 \pm 0.007$ & $0.666 \pm 0.006$ \\
\bottomrule
\end{tabular}
\end{table}

The examples of Section~\ref{sec:procedures} are analytic. This section adds one Monte Carlo
study, whose purpose is to place the two quantities of Theorem~\ref{thm:class}(d) --- the
population factor $\eff$ and the delay ratio it governs --- beside each other over a range of
pairs and dictionaries wide enough that agreement would not be an accident.
Table~\ref{tab:efficiency} reports, for six pre- and post-change pairs and the three
dictionaries at $s\in\{1,2,3\}$, the population values of $\Jfunc{s}$, $\thetaz$, $\eff$
and $\kap$ by quadrature in the normalisation $\matF=\Czero+\Cone$ of
\eqref{eq:normal-system}, beside two measured ratios of the oracle's detection delay to the
surrogate's: one for the \CUSUM{} \eqref{eq:cusum}, one for the Shiryaev--Roberts rule
\eqref{eq:sr}. The caption fixes every calibration quantity of the experiment --- how many
in-control runs, of what length, which threshold, which seed --- and they are not repeated
here.

Counts below run over all $54$ configurations, the shared $s=1$ row expanded back into its
three dictionaries; the table prints $42$.
The population columns behave as Theorem~\ref{thm:class}(d) requires. First, $\eff\le1$ in
every configuration;
for the Gaussian mean shift at $s=1$, $\thetaz=2.000$ and $\eff=1.000$, which is
Example~\ref{ex:gaussian}, where the tilted surrogate $\thetaz\Lam$ is the log-likelihood
ratio itself.
Second, $\Jfunc{s}$ rises at every one of the $36$ nested steps, as Theorem~\ref{thm:info}(a)
requires, while $\eff$ falls at $7$ of them, by at most $1.6\%$ (Student-$t$, polynomial,
$s=2\to3$). There is no contradiction: the added feature carries information, but it also
lengthens the right tail of $\Lam$ and so lowers the tilt root $\thetaz$ of \eqref{eq:tilt},
and it is the product $\thetaz\Eone\Lam$, not $\Jfunc{s}$ alone, that \eqref{eq:eff} divides
by the divergence.
Third, on the Gaussian variance change and on the Pearson~III skewness change the linear
dictionary has $\vY=\mathbf0$ and $\Lam\equiv0$, so $\eff=0.000$ --- by the parity mechanism
of Theorem~\ref{thm:parity} in the first case, and because the pair holds the mean and the
variance fixed in the second --- whereas the polynomial dictionary reaches $0.884$ at $s=2$
on the variance change and $0.407$ at $s=3$ on the skewness change. What the dictionary can
resolve, and not the size of the change, decides whether there is anything to detect.

Both ratio columns are finite-$\gamma$ measurements and neither is a direct check of $\eff$,
since Theorem~\ref{thm:class}(d) equates the two only as $\gamma\to\infty$. At the target
$\ARL=1000$ of this experiment each arm carries an $O(1)$ overshoot worth a fraction of order
$1/\log\gamma$ of its threshold, and that overshoot lengthens the shorter oracle delay
proportionally more, so the measured \CUSUM{} ratio is expected to lie above $\eff$. It does
so in $42$ of the $46$ rows that have a measured delay, in $35$ of them by more than two
standard errors of the paired ratio; the three Gaussian mean-shift rows at $s=1$, in which the
tilted surrogate is the oracle, differ from it only by the rounding of the threshold onto its
grid, a gap of $-0.0002$, and are counted in neither figure; the fourth exception, the
lognormal pair with the logarithmic dictionary at $s=3$, lies $0.0001$ below $\eff$, within
one standard error. The signed gaps run from $-0.0002$ to $0.170$ and widen as $\eff$ falls.
Raising the target closes the gap. A separate sweep, with $400$ in-control runs of length
$240\,000$ for the thresholds and $1000$ post-change runs for the delays, takes the Laplace
pair through $\ARL=10^3\to5\cdot10^3\to2\cdot10^4$: the ratio falls from $0.687$ to $0.648$
to $0.631$ with the polynomial dictionary at $s=1$, where $\eff=0.570$, and from $0.952$ to
$0.952$ to $0.940$ with the logarithmic dictionary at $s=3$, where $\eff=0.935$; the Gaussian
row in which the tilted surrogate is the oracle stays within $0.001$ of $1.000$ throughout.

The two ratio columns together are the empirical side of Theorem~\ref{thm:class}(d) being a
statement about the class $\Tclass$ of Section~\ref{sec:class} rather than about one rule.
The \CUSUM{} and the Shiryaev--Roberts statistic are fed the same tilted increment on the same
streams and are given the same efficiency factor by the theorem; the two measured ratios
agree to within two standard errors of their paired difference in $29$ of the $46$ rows in
which both are measured, the Shiryaev--Roberts ratio lying above the \CUSUM{} ratio in $20$
of them and below in the rest, and the largest absolute difference anywhere in the table is
$0.047$, with a standard error of $0.010$, on the Gaussian variance pair with the polynomial
dictionary at $s=2$, where the Shiryaev--Roberts ratio is $0.994$ against the \CUSUM{} ratio
$0.947$. The residual
disagreement is of the size the theorem leaves open. Part~(c) separates the two rules by
their threshold term alone, and at a finite $\ARL$ the two recursions accumulate different
overshoots; what the class statement fixes is the common factor $\thetaz\Eone\Lam$ in the
denominator, not the $O(1)$ terms above it. Across the whole range of $\eff$ that the table
spans, neither column departs from the population factor in a way that depends on which of
the two rules carries the increment.


\section{Open questions}\label{sec:open}

\emph{Dependence.} Everything in Sections~\ref{sec:procedures} and~\ref{sec:class} rests on
one structural fact: under independence the tilted increment of \eqref{eq:tilt} makes
$\exp\{\sum_{i\le n}\thetaz\Lam(x_i)\}$ a martingale under $\Pinf$, and every bound of
Theorem~\ref{thm:class}(b) is a stopping-time identity for it. For a dependent stream the
same route needs a martingale built from conditional increments, so the tilt root would have
to be defined from the conditional law of $x_n$ given its past rather than from $\fzero$;
whether the dictionary moments still determine that root is not answered here.

\emph{Choosing the dictionary.} The bordered increment of Proposition~\ref{prop:bordered}
prices one further column exactly, which turns the choice of $\Phi$ into an optimisation over
candidate features under a budget on $s$. This paper takes $\Phi$ as given; the selection
problem, including how far the greedy rule that the bordered increment suggests can fall
below the best subset of a given size, is untouched.

\emph{The finite-sample side of parity.} Theorem~\ref{thm:parity} is a population statement:
under the stated symmetry the relevant components of $\vY$ vanish exactly, and
Example~\ref{ex:shape} is the corresponding blind detector. With coefficients read off a
calibration sample the estimate of $\vY$ is not zero, so what one runs is a small increment
rather than none, and whether the resulting statistic is merely uninformative or actively
misleading at a fixed threshold is a question about the joint law of the estimate and the run
length that the asymptotics here do not reach.


%
%

\section*{Disclosure of interest}

The author reports there are no competing interests to declare.

\section*{Funding}

No funding was obtained for the work reported in this article.

\section*{Declaration of generative AI and AI-assisted technologies in the manuscript preparation process}

During the preparation of this work the author used Claude Code (Anthropic; Claude Fable~5.1 and
Claude Opus~5 models) to implement and organise the code of the Monte-Carlo study, to assist with
the \LaTeX{} typesetting, and for language editing. Drafts of the proofs were additionally read
for gaps by GPT-6 Astra (OpenAI) and Gemini~3.1 Pro (Google) in the role of adversarial referees;
every objection so raised was checked by the author against the proof before any change was made,
and no statement or proof was generated by these tools. After using these tools the author
reviewed and edited the material as needed and takes full responsibility for the content of the
publication.

\section*{Data availability statement}

This article studies specified laws and uses no measured data, so no data set was collected or
analysed for it. The code that produces Table~\ref{tab:efficiency}, together with the experiment
scripts and the result files each table names, is available at
\url{https://github.com/SZabolotnii/KuYuPe-Change_Point-code-supplement}. The Monte-Carlo study of
Section~\ref{sec:numerics} is run from a fixed seed and is exactly reproducible.

\bibliography{references}
\appendix
\section{Proofs}\label{app:proofs}

\subsection{Proofs for Section~\ref{sec:projection}}\label{app:projection}

\subsubsection{Proof of Proposition~\ref{prop:ku1}}\label{app:ku1}

Part (ii) follows the \KU{1} ratio route of \citet{zabolotnii2026spl} and the envelope identity
of \citet{zabolotnii2026mpfd}, both stated in the same metric $\Czero+\Cone$; part (i), the
scale-invariant lower bound and its reduction to $\matF\vK=\vY$, is proved here for
completeness.

\emph{Variation of the \KU{1} ratio.}
For $\psi=k_0+\mathbf{k}\transp\vphi$, $\Var_i\psi=\mathbf{k}\transp\mathbf{C}_i\mathbf{k}$
and $\Eone\psi-\Ezero\psi=\mathbf{k}\transp\vY$, which gives the second form in
\eqref{eq:ku1}; $k_0$ cancels and $\mathcal{R}$ is homogeneous of degree $0$ in $\mathbf{k}$.
With $\mathbf{S}=\Czero+\Cone=2\matF$, differentiation of the quotient on
$\{\mathbf{k}\transp\vY\neq0\}$ gives
\[
\nabla_{\mathbf{k}}\mathcal{R}
=\frac{2}{(\mathbf{k}\transp\vY)^3}
\Bigl[(\mathbf{k}\transp\vY)\,\mathbf{S}\mathbf{k}-(\mathbf{k}\transp\mathbf{S}\mathbf{k})\,\vY\Bigr],
\]
so $\nabla_{\mathbf{k}}\mathcal{R}=\mathbf{0}$ if and only if
$\mathbf{S}\mathbf{k}=\lambda\vY$ with
$\lambda=\mathbf{k}\transp\mathbf{S}\mathbf{k}/\mathbf{k}\transp\vY$, which is the
stationarity condition of Proposition~\ref{prop:ku1}(i). Since $\mathbf{S}=2\matF\succ0$ under
Assumption~\ref{ass:moments}, every stationary point is a multiple of $\matF^{-1}\vY$, and
the choice $\lambda=2$ is \eqref{eq:normal-system}.

\emph{Minimum value.}
In the inner product $\langle\mathbf{a},\mathbf{b}\rangle_{\matF}=\mathbf{a}\transp\matF\mathbf{b}$
the Cauchy--Schwarz inequality reads
\[
(\mathbf{k}\transp\vY)^2=\langle\mathbf{k},\matF^{-1}\vY\rangle_{\matF}^2
\le(\mathbf{k}\transp\matF\mathbf{k})(\vY\transp\matF^{-1}\vY),
\]
with equality if and only if $\mathbf{k}\propto\matF^{-1}\vY$. Hence
$\mathcal{R}[\psi]=2\,\mathbf{k}\transp\matF\mathbf{k}/(\mathbf{k}\transp\vY)^2\ge2/\Jfunc{s}$,
which is (i), and equivalently
\[
\Jfunc{s}=\max_{\mathbf{k}\neq\mathbf{0}}
\frac{(\mathbf{k}\transp\vY)^2}{\mathbf{k}\transp\matF\mathbf{k}}.
\]
Replacing $\mathbf{k}$ by $c\mathbf{k}$ leaves $\mathcal{R}$ unchanged, so only the direction of
$\vK$ matters for detection; the scale fixed by $\matF\vK=\vY$ is what makes $\Lam$ the surrogate
used throughout.

\emph{Projection identity.}
Let $\bar{\mathbf{u}}=\tfrac12(\uzero+\uone)=\E_m\vphi$ and
$\tilde{\vphi}=\vphi-\bar{\mathbf{u}}$, so that $\Lam=\vK\transp\tilde{\vphi}$ by
\eqref{eq:surrogate}. Under $m$,
\[
\mathrm{Cov}_m(\vphi)
=\tfrac12\bigl(\Ezero\vphi\vphi\transp+\Eone\vphi\vphi\transp\bigr)
-\bar{\mathbf{u}}\bar{\mathbf{u}}\transp
=\tfrac12\matF+\tfrac14\vY\vY\transp,
\qquad
\E_m[\tilde{\vphi}\,h^{*}]=\tfrac12\int\tilde{\vphi}\,(\fone-\fzero)\,\mathrm{d}x=\tfrac12\vY,
\]
because $h^{*}m=\tfrac12(\fone-\fzero)$ and $\E_m h^{*}=0$. The projection of $h^{*}$ onto
$\mathrm{span}\{1,\vphi\}$ is $\Pi_s h^{*}=\mathbf{a}\transp\tilde{\vphi}$ with
$\mathbf{a}=\mathrm{Cov}_m(\vphi)^{-1}\E_m[\tilde{\vphi}h^{*}]
=\bigl(\matF+\tfrac12\vY\vY\transp\bigr)^{-1}\vY$; by the Sherman--Morrison
formula, with $J=\Jfunc{s}$,
\[
\bigl(\matF+\tfrac12\vY\vY\transp\bigr)^{-1}\vY
=\frac{\matF^{-1}\vY}{1+\tfrac12\vY\transp\matF^{-1}\vY}
=\frac{2\vK}{2+J},
\qquad\text{so}\qquad
\mathbf{a}=\frac{2\vK}{2+J},
\]
which is the first identity in \eqref{eq:projection}. The second follows from
\[
\|\Pi_s h^{*}\|_m^2=\mathbf{a}\transp\mathrm{Cov}_m(\vphi)\,\mathbf{a}
=\frac{4}{(2+J)^2}\Bigl(\tfrac12\vK\transp\matF\vK+\tfrac14(\vK\transp\vY)^2\Bigr)
=\frac{4}{(2+J)^2}\cdot\frac{J(2+J)}{4}
=\frac{J}{2+J}.
\]
For a dictionary orthonormal and centred in $L^2(m)$, $\E_m[\varphi_i h^{*}]$ is the Fourier
coefficient $\langle\varphi_i,h^{*}\rangle_m$ and $Y_i=2\langle\varphi_i,h^{*}\rangle_m$; the
Riesz--Fischer theorem gives $\|\Pi_s h^{*}-h^{*}\|_m\to0$ when the system is complete.
\qed

\subsubsection{Proof of Theorem~\ref{thm:info}}\label{app:info}

Part (a) is the nested-subspace projection argument; part (b) is the Cauchy--Schwarz envelope of
the MPFD supplement \citep{zabolotnii2026mpfd}, in the same metric; the
Bernoulli counterexamples of part (c) are new here.

Put $\rho_s=\|\Pi_s h^{*}\|_m^2$. By \eqref{eq:projection}, $\rho_s=J/(2+J)$, that is
$\Jfunc{s}=2\rho_s/(1-\rho_s)$, which is the equality in \eqref{eq:ceiling}; the map
$t\mapsto2t/(1-t)$ is continuous and strictly increasing on $[0,1)$, and $\rho_s<1$ because
$\Jfunc{s}<\infty$ under Assumption~\ref{ass:moments}.

(a) Nested spans give $\Pi_s=\Pi_s\Pi_{s+1}$, hence $\rho_s\le\rho_{s+1}$ and
$\Jfunc{s}\le\Jfunc{s+1}$.

(b) $\rho_s\le\|h^{*}\|_m^2$, with equality if and only if $h^{*}$ lies in the closed,
finite-dimensional span, and
\[
\|h^{*}\|_m^2
=\int\frac{(\fone-\fzero)^2}{(\fone+\fzero)^2}\cdot\frac{\fone+\fzero}{2}\,\mathrm{d}x
=\frac{\tri}{2}.
\]
Monotonicity of $t\mapsto2t/(1-t)$ gives
$\Jfunc{s}\le2(\tri/2)/(1-\tri/2)=2\tri/(2-\tri)$ with the same equality condition, and the
span of $\{1,h^{*}\}$ contains $h^{*}$; the single feature $h^{*}$ is admissible when
$\fzero\neq\fone$, since $h^{*}$ constant $\fzero$-almost surely would force $\fone=k\fzero$ on the
support of $\fzero$, hence $k=1$ by $\fone\ll\fzero$ and normalisation. If the union of the nested
spans is dense in $L^2(m)$,
the projections onto an increasing sequence of closed subspaces with dense union converge
strongly to the identity, so $\rho_s\uparrow\tri/2<1$ and
$\Jfunc{s}\uparrow2\tri/(2-\tri)$ by continuity; then
$1+\Jfunc{s}/2\to2/(2-\tri)$ and \eqref{eq:projection} gives
$\Lam\to2h^{*}/(2-\tri)$ in $L^2(m)$.

(c) On the two-point space $\{0,1\}$ with $\fzero=(1-p,p)$ and $\fone=(1-r,r)$, the feature
$\varphi_1(x)=x$ spans all functions modulo constants, so $\Jfunc{1}$ equals the ceiling:
$\vY=r-p$, $\matF=p(1-p)+r(1-r)$, $\Jfunc{1}=(r-p)^2/\matF$, while
$\Dkl(\fone\|\fzero)+\Dkl(\fzero\|\fone)=(r-p)\bigl(\ln\tfrac{r}{1-r}-\ln\tfrac{p}{1-p}\bigr)$.
For $(p,r)=(0.01,0.99)$ these are $0.9604/0.0198\approx48.5$ and $1.96\ln99\approx9.01$, the
ceiling above the divergence; for $(p,r)=(0.2,0.8)$ they are $0.36/0.32=1.125$ and
$1.2\ln4\approx1.664$, the ceiling below it.
\qed


\subsection{Proofs for Section~\ref{sec:structure}}\label{app:structure}

\subsubsection{Proof of Theorem~\ref{thm:parity}}

The route is the proof of Proposition~1 of \citet{zabolotnii2026mpfd}, written about the centre
$c$ instead of the origin and for the clipped dictionary of Assumption~\ref{ass:bounded}; what
the transfer needs is that clipping does not disturb the parity split, and that is settled first.
Every moment below is a moment of a bounded function, finite by Assumption~\ref{ass:bounded}.

\emph{Clipping preserves parity about $c$.} The clip $\chi$ of
Section~\ref{sec:structure} is odd, $\chi(-y)=-\chi(y)$, so the clipped
form $\varphi=\chi\circ\tilde\varphi$ of a raw feature $\tilde\varphi$ odd about $c$ satisfies
$\varphi(2c-x)=\chi\bigl(-\tilde\varphi(x)\bigr)=-\varphi(x)$, and one even about $c$ satisfies
$\varphi(2c-x)=\chi(\tilde\varphi(x))=\varphi(x)$. Each clipped column has the parity of the raw
column it clips, and a product of two of them is even about $c$ when the parities agree and odd
when they differ. Finally, $x\mapsto2c-x$ is an involution with unit Jacobian, so
$\int g(x)f(x)\,\mathrm{d}x=\int g(2c-u)f(2c-u)\,\mathrm{d}u$ for every bounded $g$ and every
density $f$; both parts of the theorem are this reflection identity applied to a suitable $g$.

\emph{(i) Common symmetry.} Let $f_j(2c-x)=f_j(x)$, $j=0,1$. The identity with $f=f_j$ gives
$\mathbb{E}_j[g(x)]=\mathbb{E}_j[g(2c-x)]$. With $g=\varphi$ odd about $c$ it reads
$\mathbb{E}_j\varphi=-\mathbb{E}_j\varphi$, so $\mathbb{E}_j\varphi=0$ for $j=0,1$: the odd parts
of $\uzero$ and $\uone$ vanish and $\vY_{\mathrm{o}}=\mathbf{0}$. With $g=\varphi_u\varphi_v$,
$\varphi_u$ even and $\varphi_v$ odd about $c$, the product is odd about $c$, so
$\mathbb{E}_j[\varphi_u\varphi_v]=0$; since also $\mathbb{E}_j\varphi_v=0$,
$\mathrm{Cov}_j(\varphi_u,\varphi_v)=\mathbb{E}_j[\varphi_u\varphi_v]
-\mathbb{E}_j\varphi_u\,\mathbb{E}_j\varphi_v=0$ for $j=0,1$. Both cross-parity blocks vanish,
hence so does that of $\matF=\Czero+\Cone$, and
$\matF=\mathrm{diag}(\matF_{\mathrm{ee}},\matF_{\mathrm{oo}})$ with
$\vY=(\vY_{\mathrm{e}}\transp,\mathbf{0}\transp)\transp$. Under Assumption~\ref{ass:moments}
$\matF\succ0$, so both principal blocks are positive definite, and the block-diagonal inverse
gives $\Jfunc{s}=\vY_{\mathrm{e}}\transp\matF_{\mathrm{ee}}^{-1}\vY_{\mathrm{e}}$ whatever
$\matF_{\mathrm{oo}}$ is. If every feature is odd about $c$ the even block is empty,
$\vY=\mathbf{0}$ and $\Jfunc{s}=0$ at every order; applied to the clipped column $x-c$ alone this
is $\Jfunc{1}=0$ for the linear dictionary.

\emph{(ii) Mirror pair.} Let $\fone(x)=\fzero(2c-x)$. The identity with $f=\fzero$ gives
$\Eone[g(x)]=\int g(x)\fzero(2c-x)\,\mathrm{d}x=\Ezero[g(2c-x)]$ for every bounded $g$. Hence
$\Eone\varphi_u=\Ezero\varphi_u$ for $\varphi_u$ even about $c$, so $\vY_{\mathrm{e}}=\mathbf{0}$,
and $\Eone\varphi_v=-\Ezero\varphi_v$ for $\varphi_v$ odd about $c$. For a cross-parity pair the
product is odd about $c$, so $\Eone[\varphi_u\varphi_v]=-\Ezero[\varphi_u\varphi_v]$ and
\begin{equation*}
\mathrm{Cov}_1(\varphi_u,\varphi_v)
=-\Ezero[\varphi_u\varphi_v]-\Ezero\varphi_u\bigl(-\Ezero\varphi_v\bigr)
=-\mathrm{Cov}_0(\varphi_u,\varphi_v).
\end{equation*}
The two cross-parity blocks are of opposite sign and cancel in
$\matF=\Czero+\Cone$, neither vanishing on its own. As in (i), $\matF$ is block diagonal in parity, now with the target on the odd block, so
$\Jfunc{s}=\vY_{\mathrm{o}}\transp\matF_{\mathrm{oo}}^{-1}\vY_{\mathrm{o}}$ and every even
feature contributes nothing. If the even features are completed by the clipped linear column
$\varphi_{\mathrm{lin}}$ alone, the odd block is one-dimensional, with
$Y_{\mathrm{o}}=-2\,\Ezero\varphi_{\mathrm{lin}}$ and, since
$\Eone[\varphi_{\mathrm{lin}}^{2}]=\Ezero[\varphi_{\mathrm{lin}}^{2}]$ and
$(\Eone\varphi_{\mathrm{lin}})^{2}=(\Ezero\varphi_{\mathrm{lin}})^{2}$,
$\matF_{\mathrm{oo}}=\Var_0\varphi_{\mathrm{lin}}+\Var_1\varphi_{\mathrm{lin}}
=2\Var_0\varphi_{\mathrm{lin}}$; hence
$\Jfunc{s}=2(\Ezero\varphi_{\mathrm{lin}})^{2}/\Var_0\varphi_{\mathrm{lin}}$, zero if and only
if $\Ezero\varphi_{\mathrm{lin}}=0$.
\qed

\subsubsection{Proof of Proposition~\ref{prop:bordered}}

This is the redundancy criterion of \citet{breusch1999redundancy} written for the metric $\matF$;
the identity is proved for completeness. New here are only its reading as the price of one
dictionary column and the fact that $\matF=\Czero+\Cone$, being a sum of covariance
matrices, is symmetric, which is what the squared form requires.

Write $\matF_{\cup},\vY_{\cup}$ for the bordered metric and target of the statement. By
Assumption~\ref{ass:moments} applied to $\Phi\cup\{\psi\}$, $\matF_{\cup}\succ0$; hence
$\mathbf{A}\succ0$ is invertible and $S=d-\mathbf{b}\transp\mathbf{A}^{-1}\mathbf{b}>0$ as the
Schur complement of a positive definite matrix. Put
$\vK_{\cup}=\bigl(\mathbf{A}^{-1}\mathbf{y}-\tfrac{\tilde Y}{S}\mathbf{A}^{-1}\mathbf{b},\,
\tfrac{\tilde Y}{S}\bigr)$, written as a column. Its upper block under $\matF_{\cup}$ is
$\mathbf{y}-\tfrac{\tilde Y}{S}\mathbf{b}+\tfrac{\tilde Y}{S}\mathbf{b}=\mathbf{y}$ and its lower
block is
$\mathbf{b}\transp\mathbf{A}^{-1}\mathbf{y}
+\tfrac{\tilde Y}{S}(d-\mathbf{b}\transp\mathbf{A}^{-1}\mathbf{b})
=\mathbf{b}\transp\mathbf{A}^{-1}\mathbf{y}+\tilde Y=\eta$ by the definition of $\tilde Y$. So
$\matF_{\cup}\vK_{\cup}=\vY_{\cup}$, and as $\matF_{\cup}$ is invertible $\vK_{\cup}$ is the
solution of \eqref{eq:normal-system} for the enlarged dictionary. By \eqref{eq:Jfunc}, applied to
$\Phi\cup\{\psi\}$ and to $\Phi$,
\begin{equation*}
\Jfunc{\Phi\cup\{\psi\}}=\vY_{\cup}\transp\vK_{\cup}
=\mathbf{y}\transp\mathbf{A}^{-1}\mathbf{y}
+\frac{\tilde Y}{S}\Bigl(\eta-\mathbf{y}\transp\mathbf{A}^{-1}\mathbf{b}\Bigr)
=\Jfunc{\Phi}+\frac{\tilde Y}{S}\Bigl(\eta-\mathbf{y}\transp\mathbf{A}^{-1}\mathbf{b}\Bigr),
\end{equation*}
which is the identity with no symmetry hypothesis on $\mathbf{A}$. Here $\mathbf{A}$ is a
principal submatrix of $\matF=\Czero+\Cone$, a sum of covariance matrices, so
$\mathbf{A}$ and $\mathbf{A}^{-1}$ are symmetric and
$\mathbf{y}\transp\mathbf{A}^{-1}\mathbf{b}=\mathbf{b}\transp\mathbf{A}^{-1}\mathbf{y}
=\eta-\tilde Y$; the bracket is then $\tilde Y$ and the increment is $\tilde Y^{2}/S$, which is
\eqref{eq:increment}. Since $S>0$ it is non-negative and vanishes if and only if $\tilde Y=0$.
\qed


\subsection{Proofs for Section~\ref{sec:procedures}}\label{app:procedures}

\subsubsection{Proof of Theorem~\ref{thm:class}}

Throughout, $Z_i=\thetaz\Lam(x_i)$ is the increment of Section~\ref{sec:class}, and
$\lambda_p=\log\{1/(1-p)\}\ge0$. Every entry of $\vphi$ and of the centre
$\tfrac12(\uzero+\uone)$ is bounded by $\phimax$ in absolute value, so
Assumption~\ref{ass:bounded} gives $|\Lam|\le2\phimax\|\vK\|_1<\infty$ and every exponential
moment used below is finite;
this single fact is what replaces the truncation arguments that an unbounded increment
would require, and it is used at each optional-stopping step.

\paragraph{(a) The positive root and the tilted law.}
The construction is the exponential change of measure of \citet[Ch.~3]{siegmund1985sequential},
applied here to a statistic that is not a log-likelihood ratio. Put
$\psi(t)=\log\Ezero e^{t\Lam}$. Boundedness of $\Lam$ makes $\psi$ finite and real-analytic on
$\real$, with $\psi(0)=0$ and, by the drift identities of Section~\ref{sec:projection},
$\psi'(0)=\Ezero\Lam=-\tfrac12\Jfunc{s}<0$, the inequality because
Assumption~\ref{ass:moments} gives $\matF\succ0$ and $\vY\neq\mathbf0$. If
$\Pzero(\Lam>0)=0$ then $\Lam\le0$ $\fzero$-almost surely, hence $\fone$-almost surely because
$\fone\ll\fzero$, so $\Eone\Lam\le0$, contradicting $\Eone\Lam=\tfrac12\Jfunc{s}>0$. Therefore
$\pi=\Pzero(\Lam>\eps)>0$ for some $\eps>0$ and $e^{\psi(t)}\ge\pi e^{t\eps}\to\infty$ as
$t\to\infty$; in particular $\Lam$ is not $\fzero$-almost surely constant, so
$\psi''(t)=\Var_{t}\Lam>0$ under the law tilted at $t$ and $\psi$ is strictly convex. Such a
$\psi$ is positive on $(-\infty,0)$, negative on some $(0,\thetaz)$ and positive on
$(\thetaz,\infty)$ for exactly one $\thetaz>0$, so $\theta=0$ and $\theta=\thetaz$ are the only
roots of \eqref{eq:tilt}. Since $\psi(\thetaz)=0$, $\tilde\fzero=e^{\thetaz\Lam}\fzero$ is a
probability density with $\tilde\fzero/\fzero=e^{\thetaz\Lam}$.

\paragraph{(b) \CUSUM{}.}
This is Ville's inequality \citep{ville1939etude} inserted into Theorem~2 of
\citet{lorden1971procedures}. Write $S_n=\sum_{i\le n}Z_i$ and
$N_h=\inf\{n\ge1:S_n\ge h\}$ for the one-sided crossing time started from zero. The recursion
\eqref{eq:cusum} gives $W_n=S_n-\min_{0\le k\le n}S_k$, so
$T_h=\min_{k\ge1}\bigl(N_h^{(k)}+k-1\bigr)$, where $N_h^{(k)}$ is $N_h$ applied to
$x_k,x_{k+1},\dots$; by \citet[Theorem~2]{lorden1971procedures}, if
$\Pzero(N_h<\infty)\le\alpha$ then $\Einf T_h\ge1/\alpha$ and $\ADD(T_h)\le\Eone N_h$. Under
$\Pinf$ the variables $x_i$ are i.i.d.\ $\fzero$, so $M_n=e^{S_n}$ is a nonnegative martingale
with $\Einf M_n=1$ by independence and \eqref{eq:tilt}, and Ville's inequality gives
$\Pinf(N_h<\infty)=\Pinf\bigl(\sup_nM_n\ge e^{h}\bigr)\le e^{-h}$; Lorden's theorem with
$\alpha=e^{-h}$ yields $\Einf T_h\ge e^{h}$. (Had the recursion been driven by $\Lam$ rather
than by $\thetaz\Lam$, the same computation would give $e^{\thetaz h}$.)

\paragraph{(b) Shiryaev--Roberts.}
The mechanism is the martingale property of the Shiryaev--Roberts statistic under $\Pinf$, in
the form used by \citet{pollak1987average}. Under $\Pinf$ the observation $x_n$ is independent
of $\mathcal F_{n-1}=\sigma(x_1,\dots,x_{n-1})$ and has density $\fzero$, so
$\Einf\bigl[e^{Z_n}\mid\mathcal F_{n-1}\bigr]=\Ezero e^{\thetaz\Lam}=1$ by part (a). The
recursion \eqref{eq:sr} therefore gives
$\Einf[\Rn\mid\mathcal F_{n-1}]=(1+R_{n-1})\Einf[e^{Z_n}\mid\mathcal F_{n-1}]=1+R_{n-1}$,
that is, $\Rn-n$ is a $\Pinf$-martingale with $R_0-0=0$. It is integrable at every $n$:
$0\le\Rn\le n\,e^{n\thetaz\Lambda^{(s)}_{\max}}<\infty$ by Assumption~\ref{ass:bounded} and the
recursion, so $\Einf|\Rn-n|<\infty$. For a fixed $n$ the time $T_A\wedge n$ is a bounded
stopping time, and optional stopping applied to the integrable martingale $\Rn-n$ gives
$\Einf[R_{T_A\wedge n}]=\Einf[T_A\wedge n]$.
On $\{T_A\le n\}$ one has $R_{T_A\wedge n}=R_{T_A}\ge A$ by the definition of $T_A$, and on
$\{T_A>n\}$ one has $R_n\ge0$; hence
$\Einf[T_A\wedge n]\ge A\,\Pinf(T_A\le n)$. Letting $n\to\infty$, the left side increases to
$\Einf T_A$ by monotone convergence and $\Pinf(T_A\le n)\uparrow\Pinf(T_A<\infty)$, so
$\Einf T_A\ge A\,\Pinf(T_A<\infty)$. If $\Pinf(T_A<\infty)=1$ this is the claim; if
$\Pinf(T_A<\infty)<1$ then $T_A=\infty$ with positive probability and $\Einf T_A=\infty\ge A$.
No uniform integrability is invoked: the only inputs are the boundedness of $\Lam$, which makes
$\Rn$ integrable, and monotone convergence.

\paragraph{(b) Shiryaev.}
The route is the Bayesian one of \citet{shiryaev1963optimum} in the form of
\citet[Ch.~6]{tartakovsky2014sequential}: the false-alarm probability is read off the posterior
odds of $\{\tau\le n\}$.

\emph{Step 1: for a fixed increment, the probability does not see $\fone$.} Throughout this
step the increment $\Lam$, hence the stopping rule $T_B$, is held fixed; $\fone$ enters $T_B$
only through the population coefficients and the root $\thetaz$, which are not recomputed when
the post-change density is replaced below. Under the geometric prior
$\Pr(\tau=k)=p(1-p)^{k-1}$, $k\ge1$, the change time is independent of the observation
mechanism, and $\{T_B<\tau\}\cap\{\tau=k\}=\{T_B\le k-1\}\cap\{\tau=k\}$ with
$\{T_B\le k-1\}\in\mathcal F_{k-1}$. Conditionally on $\tau=k$ the variables $x_1,\dots,x_{k-1}$
are i.i.d.\ $\fzero$ whatever the post-change density is, so
$\PFA(T_B)=\sum_{k\ge1}p(1-p)^{k-1}\Pinf(T_B\le k-1)$,
a functional of $\fzero$ and $p$ only. It is therefore unchanged when $\fone$ is replaced by any
other post-change density, and we may evaluate it in the model $\tilde{\Pr}$ whose post-change
density is the tilted $\tilde\fzero$ of part (a), the prior being the same.

\emph{Step 2: $\Sn$ is the posterior odds.} In $\tilde{\Pr}$ the likelihood ratio of one
post-change observation is $\tilde\fzero(x)/\fzero(x)=e^{\thetaz\Lam(x)}=e^{Z}$, so Bayes' rule
with the geometric prior gives, for the posterior odds
$\Omega_n=\tilde{\Pr}(\tau\le n\mid\mathcal F_n)/\tilde{\Pr}(\tau>n\mid\mathcal F_n)$,
\[
\Omega_n=\frac{\sum_{k=1}^{n}p(1-p)^{k-1}\prod_{j=k}^{n}e^{Z_j}}{(1-p)^{n}} ,
\qquad \Omega_0=0 .
\]
Splitting off the $k=n$ summand, which contributes $p\,e^{Z_n}/(1-p)$, and factoring
$e^{Z_n}/(1-p)$ out of the remaining $n-1$ summands gives
$\Omega_n=(p+\Omega_{n-1})e^{Z_n}/(1-p)$ --- which is exactly the recursion
\eqref{eq:shiryaev}. Hence $\Sn=\Omega_n$ for every $n$, and
$\tilde{\Pr}(\tau>n\mid\mathcal F_n)=1/(1+\Sn)$.

\emph{Step 3: evaluation at the alarm.} Because $\{T_B=n\}\in\mathcal F_n$,
\[
\PFA(T_B)=\tilde{\Pr}(T_B<\tau)
=\sum_{n\ge1}\tilde{\Pr}(\tau>n,\ T_B=n)
=\sum_{n\ge1}\tilde{\E}\Bigl[\frac{\mathbf 1\{T_B=n\}}{1+\Sn}\Bigr]
\le\frac{1}{1+B},
\]
since $S_{T_B}\ge B$ on $\{T_B<\infty\}$ and the summands vanish on $\{T_B=\infty\}$; the
interchange needs no integrability condition, the integrand being bounded by one, and Step 1
transports the bound back to the actual model. The bound is the classical one because the
threshold of \eqref{eq:shiryaev} is applied to the posterior odds itself; a statistic
proportional to $\Omega_n$ with a different constant would carry that constant into the bound.

\paragraph{(c) Delay.}
The mechanism is Wald's identity in the form of \citet[Theorem~2]{lorden1971procedures}, with
the overshoot controlled by Assumption~\ref{ass:bounded}. Each of the three statistics dominates
a random walk restarted at any index, which is what makes one computation serve all three.

\emph{A pathwise lower bound.} Fix $1\le k\le n$. From \eqref{eq:cusum}, $W_n\ge W_{n-1}+Z_n$
and $W_n\ge0$, so induction downwards from $k$ gives $W_n\ge\sum_{j=k}^{n}Z_j$. From
\eqref{eq:sr}, $\log\Rn=\log(1+R_{n-1})+Z_n\ge\log R_{n-1}+Z_n$ and $\log\Rn\ge Z_n$ because
$1+R_{n-1}\ge1$, so $\log\Rn\ge\sum_{j=k}^{n}Z_j$. From \eqref{eq:shiryaev}, dropping $S_{n-1}$
gives $\Sn\ge p\,e^{Z_n+\lambda_p}$ and dropping $p$ gives $\Sn\ge S_{n-1}e^{Z_n+\lambda_p}$, so
$\log\Sn\ge\log p+\sum_{j=k}^{n}(Z_j+\lambda_p)$: starting the statistic at $S_0=0$ costs the
constant $\log(1/p)$, which is the only place where the initial value enters. Each bound holds
for every $k$ and does not involve the state at time $k-1$.

\emph{A renewal at $\tau$.} Fix a change time $\tau$ and a history $x_1,\dots,x_{\tau-1}$; under
$\mathbb P_\tau$ the variables $x_\tau,x_{\tau+1},\dots$ are i.i.d.\ $\fone$ and independent of
it. Write $\xi_j=Z_j+c$ with $c=0$ for the first two rules and $c=\lambda_p$ for the third, and
let $N=\inf\bigl\{m\ge1:\sum_{j=\tau}^{\tau+m-1}\xi_j\ge\beta\bigr\}$ with $\beta=h$, $\log A$,
$\log(B/p)$ respectively, each positive by hypothesis. The pathwise bound at $k=\tau$ gives
$T\le\tau-1+N$ on every path, hence $(T-\tau+1)^{+}\le N$ whatever the history is, and
$\ADD(T)\le\Eone N$ by Definition~\ref{def:metrics}.

\emph{Wald's identity.} The increments $\xi_j$, $j\ge\tau$, are i.i.d.\ with
$\Eone\xi=\thetaz\Eone\Lam+c>0$ and $|\xi|\le2\thetaz\phimax\|\vK\|_1+c<\infty$. Put
$G_m=\sum_{j=\tau}^{\tau+m-1}\xi_j$ and $G_0=0$. For each $n\ge1$ the time $N\wedge n$ is a
bounded stopping time of that sequence, so Wald's identity gives
$\Eone[G_{N\wedge n}]=(\thetaz\Eone\Lam+c)\Eone[N\wedge n]$. On $\{N\le n\}$ one has
$G_{N-1}<\beta$ --- for $N=1$ this is $G_0=0<\beta$, which is where positivity of the
log-threshold is used --- so
$G_{N\wedge n}=G_N<\beta+\thetaz\Lambda^{(s)}_{\max}+c$; on $\{N>n\}$, $G_n<\beta$. Hence
$\Eone[N\wedge n]\le(\beta+\thetaz\Lambda^{(s)}_{\max}+c)/(\thetaz\Eone\Lam+c)$ for every $n$,
and monotone convergence gives the same bound for $\Eone N$; in particular $N<\infty$
$\Pone$-almost surely. Substituting $(\beta,c)=(h,0)$, $(\log A,0)$, $(\log(B/p),\lambda_p)$ gives the
three bounds of (c), with $\Eone\Lam=\tfrac12\Jfunc{s}$ by the drift identities of
Section~\ref{sec:projection}.

\emph{The Shiryaev-sense delay.} Given $\tau=k$, the event $\{T_B\ge k\}=\{T_B>k-1\}$ lies in
$\mathcal F_{k-1}$, so conditioning on $\mathcal F_{k-1}$ inside each term of
$\E[(T_B-\tau+1)^{+}\mathbf 1\{T_B\ge\tau\}]
=\sum_{k\ge1}\Pr(\tau=k)\E_k[\mathbf 1\{T_B>k-1\}(T_B-k+1)^{+}]$
and bounding the inner conditional expectation by $\ADD(T_B)$, as
Definition~\ref{def:metrics} permits, gives
$\E[(T_B-\tau+1)^{+}\mathbf 1\{T_B\ge\tau\}]\le\ADD(T_B)\Pr(T_B\ge\tau)$. Dividing by
$\Pr(T_B\ge\tau)$ gives the last claim of (c).

\paragraph{(d) Efficiency.}
The upper half is again the change of measure of \citet[Ch.~3]{siegmund1985sequential}, in its
variational form; the lower halves are the bound of \citet{lorden1971procedures} for the
run-length constraint and its Bayesian counterpart in \citet[Ch.~6]{tartakovsky2014sequential}.

For a bounded measurable $g$ put $\tilde f=e^{g}\fzero/\Ezero e^{g}$, a density with
$\fone\ll\tilde f$. Then
$\Dkl(\fone\|\fzero)-\Eone g+\log\Ezero e^{g}=\Eone\log(\fone/\tilde f)=\Dkl(\fone\|\tilde f)\ge0$,
with equality if and only if $\fone=\tilde f$ almost surely; this is the Donsker--Varadhan
inequality \citep[Cor.~4.15]{boucheron2013concentration}. Taking $g=\thetaz\Lam$, the logarithm
vanishes by \eqref{eq:tilt}, so $\Dkl(\fone\|\fzero)\ge\thetaz\Eone\Lam>0$, which is
$0<\eff\le1$, with equality if and only if $\fone=e^{\thetaz\Lam}\fzero$ almost surely, that is
$\thetaz\Lam=\ell$ $\fzero$-almost surely; since also $\Ezero e^{\ell}=1$, no additive constant
is free. For the one-dimensional criterion quoted in the text, let
$\Lam=K_1(\varphi_1-\bar u)$ with $\bar u=\tfrac12\{(\uzero)_1+(\uone)_1\}$ and
$\ell=c_0+a\varphi_1$. Then $\thetaz\Lam=\ell$ forces $\thetaz K_1=a$ and $c_0+a\bar u=0$, that
is $\Ezero\ell+\Eone\ell=0$; conversely, if $c_0+a\bar u=0$ then $\ell=(a/K_1)\Lam$ with
$a/K_1>0$, because $aY_1=\Eone\ell-\Ezero\ell>0$ and $K_1Y_1=\Jfunc{1}>0$, and
$\Ezero e^{\ell}=1$ identifies $\thetaz=a/K_1$. Since $\Eone\ell=\Dkl(\fone\|\fzero)$ and
$\Ezero\ell=-\Dkl(\fzero\|\fone)$, the condition reads
$\Dkl(\fone\|\fzero)=\Dkl(\fzero\|\fone)$.

For the asymptotic statements, take $h=\log\gamma$ and $A=\gamma$: by (b) both give
$\Einf T\ge\gamma$, and by (c) $\ADD(T)\le\log\gamma/(\thetaz\Eone\Lam)+O(1)$, the constant
being $\Lambda^{(s)}_{\max}/\Eone\Lam$. Every stopping time with $\Einf T\ge\gamma$ has
$\ADD(T)\ge(1+o(1))\log\gamma/\Dkl(\fone\|\fzero)$ \citep{lorden1971procedures}, so the ratio of
the two is at most $(1+o(1))/\eff$. For the Shiryaev rule take $B=\alpha^{-1}-1$, so that
$\PFA(T_B)\le\alpha$ by (b); then (c) gives
$\ADD(T_B)\le\log(1/\alpha)/(\thetaz\Eone\Lam+\lambda_p)+O(1)$ as $\alpha\downarrow0$ at fixed
$p$, the $O(1)$ absorbing $\log(1/p)$ and the overshoot, while the Bayesian lower bound is
$(1+o(1))\log(1/\alpha)/\{\Dkl(\fone\|\fzero)+\lambda_p\}$
\citep[Ch.~6]{tartakovsky2014sequential}. Their ratio,
$\{\Dkl(\fone\|\fzero)+\lambda_p\}/(\thetaz\Eone\Lam+\lambda_p)$, is at most $1/\eff$ because
$\lambda_p\ge0$, $\thetaz\Eone\Lam\le\Dkl(\fone\|\fzero)$ and
$(v+c)/(w+c)\le v/w$ whenever $0<w\le v$ and $c\ge0$.

\paragraph{(e) Local limit.}
The limit functional $\vB\transp\Czero^{-1}\vB/I$ is the fraction of the Fisher information
captured by a finite set of estimating functions in the sense of \citet{godambe1960optimum}, and
the functional that governs estimation, testing and classification from one truncated likelihood
expansion in \citet{zabolotnii2026lsu}. Write $\tilde\vphi=\vphi-\uzero$ and $\E_\theta$ for
expectation under $g_\theta$.

\emph{Step 1.} For measurable $u$ with $\sup|u|\le1$,
$\E_\theta u-\Ezero u=\theta\,\Ezero[uq]+r_\theta(u)$ with $\sup_u|r_\theta(u)|=o(\theta)$: with
$a=\tfrac12\theta q\sqrt{\fzero}$ and $b=\sqrt{g_\theta}-\sqrt{\fzero}-a$,
Assumption~\ref{ass:local} gives $\|b\|_2=o(\theta)$ and $\|a\|_2=\tfrac12|\theta|\sqrt I$, and
$g_\theta-\fzero=2a\sqrt{\fzero}+2b\sqrt{\fzero}+(a+b)^2$ integrated against $u$ contributes
$\theta\Ezero[uq]$, a term at most $2\|b\|_2=o(\theta)$ by Cauchy--Schwarz, and one at most
$\|a+b\|_2^2=O(\theta^2)$. The same assumption gives $\Ezero q=0$
\citep[Theorem~7.2]{vandervaart1998asymptotic}.

\emph{Step 2.} Applied to $\varphi_i$ and $\varphi_i\varphi_j$, bounded by
Assumption~\ref{ass:bounded}, Step 1 gives $\uone=\uzero+\theta\vB+o(\theta)$ with
$\vB_i=\Ezero[\tilde\varphi_iq]$ and $\Cone=\Czero+O(\theta)$; hence $\matF=2\Czero+O(\theta)$
and $\matF^{-1}=\tfrac12\Czero^{-1}+O(\theta)$ by Assumption~\ref{ass:moments},
$\vY=\theta\vB+o(\theta)$ and $\vK=\tfrac12\theta\Czero^{-1}\vB+o(\theta)\to\mathbf0$.

\emph{Step 3.} The drift identities of Section~\ref{sec:projection} give
$\Eone\Lam=\tfrac12\vY\transp\matF^{-1}\vY$; and since \eqref{eq:surrogate} makes $\Lam$ the
linear form $\vK\transp\vphi$ up to an additive constant,
$\Var_0\Lam=\vK\transp\Czero\vK=\vY\transp\matF^{-1}\Czero\matF^{-1}\vY$. Substituting
$\Czero=\tfrac12\matF-\tfrac12(\Cone-\Czero)$ and bounding the resulting second term in absolute
value by $\tfrac12\|\Cone-\Czero\|\,\|\matF^{-1}\|\,\vY\transp\matF^{-1}\vY$ gives
$\Var_0\Lam=\Eone\Lam\,(1+O(\theta))$.

\emph{Step 4.} With $a_\theta=\sup|\Lam|\le2\phimax\|\vK\|_1\to0$ and
$|e^{u}-1-u-u^2/2|\le|u|^3e^{|u|}/6$ at $u=t\Lam$,
$\Ezero e^{t\Lam}-1=t\Ezero\Lam+\tfrac12t^2\Ezero[(\Lam)^2]+O(a_\theta^3)$ uniformly on
$t\in[0,3]$; the remainder is $o(\Var_0\Lam)$ because
$\Var_0\Lam\ge\lambda_{\min}(\Czero)\|\vK\|^2$, and
$\Ezero[(\Lam)^2]=\Var_0\Lam(1+O(\|\vK\|^2))$. With Step 3 and $\Ezero\Lam=-\Eone\Lam$,
$\Ezero e^{t\Lam}-1=\tfrac12\Var_0\Lam\bigl(t^2-2t+\rho_\theta(t)\bigr)$ with
$\sup_{t\in[0,3]}|\rho_\theta(t)|\to0$, so for each $\eps\in(0,1)$ and $\theta$ small the bracket
is negative at $t=2-\eps$ and positive at $t=2+\eps$: the positive root of (a) satisfies
$|\thetaz-2|<\eps$, that is $\thetaz\to2$.

\emph{Step 5.} By Steps 2--4,
$\thetaz\Eone\Lam=(1+o(1))\vY\transp\matF^{-1}\vY
=\tfrac12\theta^2\vB\transp\Czero^{-1}\vB+o(\theta^2)$; dividing by the assumed expansion
$\Dkl(g_\theta\|\fzero)=\theta^2I/2+o(\theta^2)$ gives $\eff\to\kap$, and along the way
$\Jfunc{s}=\vY\transp\matF^{-1}\vY=\tfrac12\theta^2I\kap+o(\theta^2)$ in the normalisation
$\matF=\Czero+\Cone$.
Finally $\vB\transp\Czero^{-1}\vB=\Ezero[(\Pi q)^2]$ with
$\Pi q=\vB\transp\Czero^{-1}\tilde\vphi$ the $L^2(\fzero)$ projection of $q$ onto the span of
$\tilde\varphi_1,\dots,\tilde\varphi_s$, so $0\le\vB\transp\Czero^{-1}\vB\le\Ezero q^2=I$, with
equality on the right if and only if $q=\Pi q$ $\fzero$-almost surely.
\qed


\subsection{Proofs for Section~\ref{sec:plugin}}\label{app:plugin}

\subsubsection{Proof of Theorem~\ref{thm:plugin}}

\begin{proof}
\emph{(a)} The route is the delta method for a smooth function of sample moments, after
\citet[Ch.~3]{vandervaart1998asymptotic}; what is specific here is the Jacobian, which is read
off the normal system rather than off a closed form for $\vK$. Under
Assumption~\ref{ass:bounded} every coordinate of $\mathbf Z_0$ and of $\mathbf Z_1$ is bounded,
so all moments below are finite, and by the strong law and the central limit theorem
$\bar{\mathbf Z}_{0,\Ncal}\to\mathbf 0$ almost surely while the stacked vector
$\sqrt{\Ncal}(\bar{\mathbf Z}_{0,\Ncal},\bar{\mathbf Z}_{1,N_1})$, whose two blocks are
independent and satisfy $N_1/\Ncal\to\lambda$, converges to $\mathcal N(0,\mathbf V)$ with
$\mathbf V=\operatorname{diag}(\mathbf V_0,\lambda^{-1}\mathbf V_1)$, the second block absent in
mode~(M) and zero when $\lambda=\infty$. Up to the factor $\Ncal/(\Ncal-1)$,
$\hat{\Czero}-\Czero$ is the second block of $\bar{\mathbf Z}_{0,\Ncal}$ minus
$(\hat{\uzero}-\uzero)(\hat{\uzero}-\uzero)\transp=O_p(\Ncal^{-1})$, so the two have the same
first-order expansion; the same holds for $\hat{\Cone}-\Cone$ in mode~(E).

The map $\boldsymbol\eta\mapsto\matF^{-1}\vY$ is smooth in a neighbourhood of the population
value: $\matF\succeq\Czero\succ0$ there by Assumption~\ref{ass:moments}, and matrix
inversion is smooth on the positive-definite cone. Differentiating $\matF\vK=\vY$ gives
$\mathrm d\matF\,\vK+\matF\,\mathrm d\vK=\mathrm d\vY$, which is the first display of
\eqref{eq:plugin-jac}; the two mode-specific lines are the differentials of
$\vY=\uone-\uzero$ and $\matF=\Czero+\Cone$, with $\mathrm d\uone=\mathbf 0$,
$\mathrm d\Cone=\mathbf 0$ in mode~(M), where $(\uone,\Cone)$ is a supplied constant and not an
estimate. Differentiating $k_0=-\tfrac12\vK\transp(\uzero+\uone)$ gives the stated
$\mathrm dk_0$, so $(\hat{\vK},\hat k_0)$ is jointly asymptotically normal with covariance
$\boldsymbol\Sigma_{\vK,k_0}=\mathbf G_{(\vK,k_0)}\mathbf V\mathbf G_{(\vK,k_0)}\transp$. The
continuous-mapping theorem gives the almost-sure convergence and the delta method the limit law.
In mode~(E) the two blocks of $\mathbf V$ are independent, so
$\boldsymbol\Sigma=\boldsymbol\Sigma_0+\lambda^{-1}\boldsymbol\Sigma_1$ with
$\boldsymbol\Sigma_i$ the contribution of the $i$-th sample.

\emph{(b), Rayleigh quotient.} This part is a stationarity argument: $\vK$ maximises $R$, so the
first-order term is absent and the loss is the quadratic form at the maximiser. Write
$\hat{\vK}=\vK+\mathbf d$, $a=\mathbf d\transp\vY$, $b=\mathbf d\transp\matF\mathbf d$, and use
$\vK\transp\matF\vK=\vK\transp\vY=\Jfunc{s}$ and $\vK\transp\matF\mathbf d=\vY\transp\mathbf d$:
\[
R(\vK+\mathbf d)=\frac{(\Jfunc{s}+a)^2}{\Jfunc{s}+2a+b}
=\Jfunc{s}-\frac{\Jfunc{s}\,(b-a^2/\Jfunc{s})}{\Jfunc{s}+2a+b}
=\Jfunc{s}-\mathbf d\transp\Bigl(\matF-\frac{\vY\vY\transp}{\Jfunc{s}}\Bigr)\mathbf d\,
\bigl(1+O(\|\mathbf d\|)\bigr).
\]
The quadratic form is non-negative by the Cauchy--Schwarz inequality in the $\matF$-inner
product, $(\mathbf d\transp\vY)^2=(\mathbf d\transp\matF\vK)^2\le(\mathbf d\transp\matF\mathbf
d)(\vK\transp\matF\vK)$. With $\mathbf d=O_p(\Ncal^{-1/2})$ from~(a) the first claim follows, and,
provided the convergence is uniformly integrable,
$\Ncal\,\E[\Jfunc{s}-R(\hat{\vK})]\to
\operatorname{tr}((\matF-\vY\vY\transp/\Jfunc{s})\boldsymbol\Sigma)$.
The captured fraction $\kap=\vB\transp\Czero^{-1}\vB/I$ of \eqref{eq:kappa} is the maximum of the
same quotient with $(\matF,\vY)$ replaced by $(\Czero,\vB)$ and the whole divided by $I$, so the
display above holds verbatim about $\Czero^{-1}\vB$, with $I^{-1}$ multiplying both sides. It is a
statement about that quotient and not about an estimator: $\vB_i=\Ezero[(\vphi-\uzero)_i q]$
involves the score $q$ of Assumption~\ref{ass:local}, which the calibration sample does not
carry, so no plug-in $\hat\kap$ is defined here and no rate is claimed for one.

\emph{(b), efficiency.} Here the mechanism is the implicit-function theorem of
\citet[Ch.~3]{vandervaart1998asymptotic}, applied to the tilt equation and followed by the
product rule. For $\Lambda_{\vK,c}=c+\vK\transp\vphi$ set $G(\theta,\vK,c)=\Ezero e^{\theta\Lambda_{\vK,c}}-1$.
Under Assumption~\ref{ass:bounded} the exponent is bounded on compact sets of $(\theta,\vK,c)$,
so dominated convergence makes $G$ of class $C^\infty$ and permits differentiation under the
integral sign. At the root $\thetaz$ of \eqref{eq:tilt}, $\partial_\theta G=\tilde{\E}_0\Lam>0$,
because $\psi(\theta)=\log\Ezero e^{\theta\Lam}$ is strictly convex with $\psi(0)=\psi(\thetaz)=0$
and $\thetaz>0$ is its second zero. The implicit-function theorem therefore makes
$\thetaz(\vK,c)$ smooth, with
$\partial\thetaz/\partial(\vK,c)=-\thetaz\,(\tilde{\E}_0\vphi,1)/\tilde{\E}_0\Lam$, and
$\Eone\Lambda_{\vK,c}=c+\vK\transp\uone$ is linear in $(\vK,c)$; the product rule applied to
$e=\thetaz\,\Eone\Lambda/\Dkl(\fone\|\fzero)$ gives the gradient in \eqref{eq:plugin-eff}. It
vanishes if and only if $\tilde{\E}_0\vphi=\uone$ and $\tilde{\E}_0\Lam=\Eone\Lam$. When
$\thetaz\Lam=\ell$ the tilted law is $\fone$ and both hold. In general neither does: the normal
system matches $\Lam$ to the bounded score in the $\matF$-inner product
(Proposition~\ref{prop:ku1}), which does not match the tilted law to $\fone$. Since
$(\hat{\vK},\hat k_0)$ is jointly asymptotically normal by~(a) and $e$ is $C^2$ near
$(\vK,k_0)$ by the same implicit-function argument, a second-order Taylor expansion gives
\eqref{eq:plugin-eff}, whose leading term is $O_p(\Ncal^{-1/2})$ whenever $\nabla e\neq\mathbf 0$.

\emph{(c)} This part owns no imported mechanism: it is the elementary variance of a sample
variance, and Assumption~\ref{ass:bounded} is what makes the fourth moment it needs finite.
$(\hat{\Czero})_{ss}$ is the sample variance of $\varphi_s$; its variance is
$(\mu_4-\mu_2^2)/\Ncal+O(\Ncal^{-2})$, $\mu_k$ being the central moments of $\varphi_s$ under
$\fzero$, so the relative standard error is
$\sqrt{(\mu_4/\mu_2^2-1)/\Ncal}=\sqrt{(\gamma_4(\varphi_s)+2)/\Ncal}$ up to $O(\Ncal^{-1})$;
when $\mu_4=\mu_2^2$, which happens exactly for a two-point feature with equal masses, the
$O(\Ncal^{-2})$ term of the variance is all there is and the relative error is $O(\Ncal^{-1})$. Under
Assumption~\ref{ass:bounded}, $|\varphi_s|\le\phimax$ and $\mu_4\le\phimax^4<\infty$.

\emph{(d)} Both limits are the delta method of \citet[Ch.~3]{vandervaart1998asymptotic} applied
to the expansion of~(a); what is new is that $\hat\thetaz$ is not a function of
$(\hat{\vK},\hat k_0)$ alone, so the estimating equation carries its own sampling term.

For $\hat J$ the map is explicit: differentiating $\Jfunc{s}=\vY\transp\matF^{-1}\vY$ gives
$\mathrm dJ=2\vY\transp\matF^{-1}\mathrm d\vY-\vY\transp\matF^{-1}\,\mathrm d\matF\,\matF^{-1}\vY
=2\vK\transp\mathrm d\vY-\vK\transp\mathrm d\matF\,\vK$, and composing with
\eqref{eq:plugin-jac} and applying the delta method to
$\sqrt{\Ncal}(\hat{\boldsymbol\eta}-\boldsymbol\eta)\Rightarrow\mathcal N(0,\mathbf V)$ gives the
first limit, the last coordinate of $\mathbf Z_0$ not entering.

For $\hat\thetaz$ let
$G_{\Ncal}(\theta,\boldsymbol\eta)=\Ncal^{-1}\sum_{j\le\Ncal}e^{\theta\Lambda_{\boldsymbol\eta}(x_j)}-1$,
where $\Lambda_{\boldsymbol\eta}=k_0(\boldsymbol\eta)+\vK(\boldsymbol\eta)\transp\vphi$ is built
from the moment vector by \eqref{eq:normal-system} and \eqref{eq:surrogate}. By definition
$G_{\Ncal}(\hat\thetaz,\hat{\boldsymbol\eta})=0$, while
$G(\thetaz,\boldsymbol\eta)=\Ezero e^{\thetaz\Lam}-1=0$. Three facts make a first-order expansion
legitimate. First, $|\Lambda_{\boldsymbol\eta}|\le\phimax\|\vK(\boldsymbol\eta)\|_1+|k_0|$ is
bounded uniformly on a neighbourhood of $\boldsymbol\eta$ by Assumption~\ref{ass:bounded} and the
smoothness from~(a), so $e^{\theta\Lambda_{\boldsymbol\eta}(x)}$ and its derivatives in
$(\theta,\boldsymbol\eta)$ are bounded there uniformly in $x$, and $G_{\Ncal}$ and its first two
derivatives converge uniformly on it to those of $G$. Second, $\partial_\theta
G(\thetaz,\boldsymbol\eta)=\tilde{\E}_0\Lam>0$ as in~(b), so for large $\Ncal$ the empirical
equation has a root $\hat\thetaz\to\thetaz$ in probability and $\partial_\theta G_{\Ncal}$ is
bounded away from zero at it. Third,
$G_{\Ncal}(\thetaz,\boldsymbol\eta)=\Ncal^{-1}\sum_{j\le\Ncal}(e^{\thetaz\Lam(x_j)}-1)$
is the sample mean of the last coordinate of $\mathbf Z_0$, which has mean zero under $\fzero$ by
the definition of $\thetaz$ and finite variance $\Var_0e^{\thetaz\Lam}$ by
Assumption~\ref{ass:bounded}. Expanding $G_{\Ncal}$ about $(\thetaz,\boldsymbol\eta)$ and using
$\partial_{\boldsymbol\eta}G(\thetaz,\boldsymbol\eta)
=\thetaz\,\tilde{\E}_0[\partial_{\boldsymbol\eta}\Lambda_{\boldsymbol\eta}]
=\thetaz\,(\tilde{\E}_0\vphi,1)\,\mathbf G_{(\vK,k_0)}$,
\[
\hat\thetaz-\thetaz
=-\frac{1}{\tilde{\E}_0\Lam}\Bigl[
\frac1{\Ncal}\sum_{j\le\Ncal}\bigl(e^{\thetaz\Lam(x_j)}-1\bigr)
+\thetaz\,(\tilde{\E}_0\vphi,1)\,\mathbf G_{(\vK,k_0)}(\hat{\boldsymbol\eta}-\boldsymbol\eta)
\Bigr]+o_p(\Ncal^{-1/2}),
\]
which is the inner product of $\nabla\thetaz$ of \eqref{eq:plugin-an} with the stacked sample
mean of $(\mathbf Z_0,\mathbf Z_1)$. Both terms are averages over the same calibration sample,
so their covariance is carried by $\mathbf V_0$ and not discarded; the labelled sample enters
only through $\hat{\boldsymbol\eta}$ and is independent. Applying the central limit theorem to
the stacked mean gives
$\sqrt{\Ncal}(\hat\thetaz-\thetaz)\Rightarrow\mathcal N(0,\nabla\thetaz\transp\mathbf
V\nabla\thetaz)$, and expanding that quadratic form in the blocks of $\mathbf V$ gives the
display of part~(d), the $\lambda^{-1}$ term being absent in mode~(M), where
$\hat{\boldsymbol\eta}$ has no labelled-sample block.
\end{proof}

\end{document}